\begin{document}

\title{On Cycling Risk and Discomfort: Urban Safety Mapping and Bike Route Recommendations}


\author{David Castells-Graells \and
        Christopher Salahub \and 
        Evangelos Pournaras
}


\institute{David Castells-Graells \and Christopher Salahub \and Evangelos Pournaras \at
			  Professorship of Computational Social Science, ETH Zurich\\
              Stampfenbackstrasse 48, 8092, Zurich, Switzerland \\
              Tel.: +41446320458 \\
              \email{\{davidcas,csalahub\}@student.ethz.ch}, epournaras@ethz.ch           
}

\date{Received: date / Accepted: date}

\maketitle

\begin{abstract}

\vspace{-0.5cm}
Bike usage in Smart Cities becomes paramount for sustainable urban development. Cycling provides tremendous opportunities for a more healthy lifestyle, lower energy consumption and carbon emissions as well as reduction of traffic jams. While the number of cyclists increase along with the expansion of bike sharing initiatives and infrastructures, the number of bike accidents rises drastically threatening to jeopardize the bike urban movement. This paper studies cycling risk and discomfort using a diverse spectrum of data sources about geolocated bike accidents and their severity. Empirical continuous spatial risk estimations are calculated via kernel density contours that map safety in a case study of Zurich city. The role of weather, time, accident type and severity are illustrated. Given the predominance of self-caused accidents, an open-source software artifact for personalized route recommendations is introduced. The software is also used to collect open baseline route data that are compared with alternative ones that minimize risk or discomfort. These contributions can provide invaluable insights for urban planners to improve infrastructure. They can also improve the risk awareness of existing cyclists' as well as support new cyclists, such as tourists, to safely explore a new urban environment by bike. 

\keywords{cycling \and bike \and accident \and severity \and weather \and Zurich \and risk \and safety \and route \and recommendation \and Smart City \and kernel density}\vspace{-0.5cm}
\end{abstract}

\section{Introduction}\label{sec:introduction}

The use of bikes is transforming urban environments to Smart Cities~\cite{smartcity,Pournaras2018} with the capacity to meet challenging sustainable development goals: cycling supports a more healthy lifestyle, it can decrease energy consumption and carbon emissions in urban centers, reduce traffic jams, limit the need to extensive car parking infrastructures and instead unfold opportunities for building parks, greenery and recreation areas. Bikes also introduce a new experience for tourists to explore a city. Bike sharing infrastructures massively expanding in urban centers show a climax of this transformation.

However, there is evidence that bike accidents are rising as well: While cycling traffic has increased by 35\% since 2013 in Z\"urich, reported bike accidents have increased\footnote{Available at https://www.limmattalerzeitung.ch/limmattal/zuerich/autofahren-wird-sicherer-auf-zwei-fraedern-ist-das-gegenteil-der-fall-132314683 (last accessed: May 2019).} by 60\%. Cyclists' accidents are 5 to 6 times higher per travelled kilometer than that of car occupants~\cite{Kaygisiz2017} in Norway. It is predicted that the accident rate of cyclists is almsot 20 times higher than that of car occupants, when unrecorded bike accidents are considered~\cite{Kaygisiz2017}. Almost half million cyclists die every year in traffic accidents~\cite{World2015}. New insights about why bike accidents happen and how to decrease them are imperative for the adoption of bike as a predominant transport mean in sustainable Smart Cities. 

This paper introduces a data-driven estimation of cycling risk and discomfort, which is used to map safety in the city of Z\"urich. The proposed estimation model is based on kernel density contours that can provide a \emph{continuous estimation of risk on the traffic network}. The severity of the accidents, their causes, the role of the weather/seasonality as well as the day and time that accidents occur are expensively studied using a diverse spectrum of data sources from public authorities, health insurance policies, OpenStreetMap traces as well as typical routes collected from Z\"urich cyclists. The predominance of self-caused accidents indicate the potential to improve bike safety via personalized route recommendations generated by an open-source software artifact with which users can balance cycling safety and comfort.

Earlier work on bike safety studies environmental and demographic factors related to cycling safety, e.g. age, gender, daylight conditions and use of helmet~\cite{RODGERS1995215,RIVARA1997}. Findings focus on the limited protection by helmets or the necessity that a child is developmentally ready for cycling. Data are mainly analyzed at an aggregate national level in the United States and location-specific risks are not taken into account. 

Other work focuses on the design of risk metrics to measure safety such as the concept of \emph{exposure} that accounts for the cycling distance and time spent before an accidents occurs~\cite{BIKEEXPOSURE}. The measure of exposure requires the choice of specific areas and times for modeling, in contrast to the approach introduced in this paper that can generalize the risk estimate to a continuous geographic spectrum. Moreover, the concept of exposure conveys the potential of an accident, while the risk estimate of this paper is explicitly based on actual accident data reported to official authorities. The relation between risk and exposure as well as other methodologies to model and measure cycling safety are reviewed extensively in a recent report by U.S. Department of Transportation~\cite{USDEPTRANREP}, without though providing any quantitative data analysis as performed in this paper.  

Other related work focuses on assessing the cycling routes safety in Berlin by counting the number of hot spots and dangerous intersections that a route contains~\cite{CyclingSafetyRouter}. No continues risk estimation is provided and all such route features are treated and counted equally. Spatio-temporal analysis frameworks of bike accidents are earlier introduced based on network-based kernel density estimation~\cite{Bil2013,Kaygisiz2017}. For instance, applicability in the city center of Vienna, Austria, reveals that bike accident hot spots vary in space according to season, light, and precipitation conditions, while these hot spots cluster by intersections and bus/tram/subway/bike stations~\cite{Kaygisiz2017}. Although the scope of this work is the closest to this paper, it neither covers route discomfort nor route recommendations. 

The contributions of this paper are outlined as follows: (i) A continuous spatial risk and route discomfort estimation model for mapping cyclists' urban safety. (ii) A personalized route recommendation system that balances cycling risk and discomfort. (iii) The design of a novel data analytics pipeline to map cycling risk that combines processes receiving as input a diverse spectrum of data sources. (iv) Findings about the number of accidents, their severity and their causes in the area of Z\"urich, the influence of weather/seasonality as well as daily/weekly accident patterns. (v) An open-source software artifact\footnote{Available at https://github.com/Salahub/BikeRouteRecommender (last access: May 2019)} for the interactive collection of bike route data as well as the computation of personalized route recommendations. (vi) An open dataset~\cite{dataset2019} of cyclists' bike routes in the area of Zurich that can be used as baselines for multi-objective bike route optimization.

This paper is organized as follows: Section~\ref{sec:risk-model} and~\ref{sec:discomfort-model} introduce the spatial risk and route discomfort estimation models respectively. Section~\ref{sec:recommendations} introduces the concept of personalized route recommendations that balance safety and comfort. Section~\ref{sec:methodology} illustrates the experimental methodology for the evaluation of the risk estimation model as well as a software artifact for data collection and bike route recommendations. Section~\ref{sec:evaluation} illustrates the findings of the performed data analysis. Finally, Section~\ref{sec:conclusion} concludes this paper and outlines future work. 


\section{Spatial Risk Estimation Model}\label{sec:risk-model}

This section introduces a general-purpose data-driven model for the spatial estimation of transport risk using geolocated traffic and accident data. Table~\ref{eq:math} illustrates the main mathematical symbols and their data applicability. Risk is the conditional probability of involvement in a traffic accident $A$ given the use of a particular transit method $T$. It is represented by a conditional probability density $A|T$ as follows: 

\begin{equation}
  f_{A|T}(\textbf{x}) = \frac{f_{A,T}(\textbf{x})}{f_T(\textbf{x})},
  \label{eq:ConditionalDens}
\end{equation}

\noindent where $f_A(\textbf{x})$ and $f_T(\textbf{x})$ are the respective marginal densities and $f_{A,T}(\textbf{x})$ is the joint density. $f_{A,T}(\textbf{x})$ can be viewed as a normalization or regularization of $f_A(\textbf{x})$, which accounts for the traffic level at a certain location. In practice, the estimation of the conditional density, $f_{A|T}(\textbf{x})$, is feasible with geolocated accident and transit data for the transit method $T$, representing samples from the joint distribution $f_{A,T}(\textbf{x})$ and the marginal density $f_T(\textbf{x})$ respectively. Given that the individual accident coordinates $\{\textbf{x}_i\}_{i=1,\dots,n}$ are discrete points, a continuous and non-parametric density estimate can be calculated using kernel density estimation\footnote{Rather than estimating a global density model, estimation consists of averaging local density models, i.e. kernels. The final model is not a member of a parametrized distribution family, though the local density estimates may be.} (KDE)~\cite{KDE}. 

\begin{center}
\begin{table}[!htb]
\centering
\caption{Notation of the spatial risk estimation model and its applicability.}\label{eq:math}
\begin{tabular}{|c|c|c|}
\hline
\textbf{Notation} & \textbf{Meaning} & \textbf{Applicability} \\
\hline
$\textbf{x} \in \Re^m$ & Estimation coordinate & Z\"urich street network grid \\
$\textbf{x}_i \in \Re^m$ & Observation coordinate & Accident and traffic data \\
$m$ & Dimension of coordinates & 2 \\
$n$ & Number of observations & 1305 accidents, 242801 GPS trace points \\
$h$ & Scalar bandwidth parameter & 0.003 \\
$K_h(\cdot)$ & $h$-parameterized kernel & $\frac{1}{\sqrt{2 \pi h}^m} e^{\frac{\textbf{x}^T \textbf{x}}{2h}}$ \\
$\textbf{x}^T$ & Transpose operation & - \\
$\hat{f}$ & Estimator of $f$ & - \\
$A_s$ & $s^{th}$ partition of the accident data & Accidents with severity $s$ \\
$S$ & Number of partitions & 3 \\
\hline
\end{tabular}
\end{table}
\end{center}

\vspace{-0.5cm}

Given $n$ points in $d$ dimensions and a local density function, or kernel, the equation for
estimating the density at a point $\textbf{x}$ using the set $\{\textbf{x}_i\}_{i=1,\dots,n}$ for a given kernel $K$ parameterized locally by $h$ is given as follows:

\begin{equation}
\hat{f}(\textbf{x}) = \frac{1}{n} \sum_{i=1}^n K_h \left ( \textbf{x} - \textbf{x}_i \right ),\label{eq:KernelDen}
\end{equation}

\noindent where $K_h$ is the kernel function normalized such that integration over its local support is one, e.g. the Gaussian density:

\begin{equation}
K_h(\textbf{x}) = \frac{1}{(\sqrt{2 \pi h})^m} e^{\frac{\textbf{x}^T \textbf{x}}{2h}}.\label{eq:Gaussian-kernel}
\end{equation}

\noindent An isotropic zero-centered Gaussian kernel ensures that the local density estimation does not preferentially estimate densities in any transit direction. This is applicable if the orientation of all streets within a region is not known a priori. However, transit points can be spuriously related using such a kernel, e.g. accidents that occur on parallel but disconnected streets. 

The kernel density estimation assumes a homogeneous, i.e. linear, influence of the geolocated accident data that estimate the risk. However, additional meta-information can be employed to polarize the risk at certain accident locations, for instance, and assign a higher weight to accidents that result in a death than those resulting only in minor injuries. Assume that $\{\textbf{x}_i\}_{i=1,\dots,n}$ can be partitioned into $S$ meaningful subsets indexed by $s$. Classifying the geolocated accident data by severity gives subsets $A_s$ of size $n_s = |A_s|$ labeled by meta-information about accident severity. The kernel density is reestimated in this case according to Corollary~\ref{eq:heterogeneous}.

\begin{corollary}\label{eq:heterogeneous}
The kernel density estimation of geolocated accident data classified in $S$ subsets each with size $n_s$ is calculated as follows:

\begin{equation}
\hat{f}(\textbf{x}) = \sum_{s=1}^S \frac{n_s}{n} \hat{f}_{A_s}(\textbf{x}),
\label{eq:KernDenEqReweight}
\end{equation}

\noindent where $\hat{f}_{A_s}$ is the kernel density estimated with the data of subset $A_s$.

\end{corollary}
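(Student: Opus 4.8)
The plan is to prove this as a direct algebraic regrouping of the finite sum defining the kernel density estimator in Equation~\eqref{eq:KernelDen}, with no analytic or limiting arguments required. First I would record the structural hypothesis: $\{A_s\}_{s=1,\dots,S}$ is a \emph{partition} of the index set $\{1,\dots,n\}$, so that every observation $\textbf{x}_i$ lies in exactly one subset and the subset sizes satisfy $\sum_{s=1}^S n_s = n$. This disjoint-and-exhaustive property is exactly what licenses rewriting the single sum over all $n$ observations as an iterated sum, first over the subsets $s$ and then over the indices contained in each $A_s$:
\[
\hat{f}(\textbf{x}) \;=\; \frac{1}{n}\sum_{i=1}^n K_h(\textbf{x}-\textbf{x}_i) \;=\; \frac{1}{n}\sum_{s=1}^S \sum_{i \in A_s} K_h(\textbf{x}-\textbf{x}_i).
\]

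Next I would insert the factor $n_s/n_s = 1$ into each inner sum and pull the integer $n_s$ to the front, so that the bracketed remainder $\frac{1}{n_s}\sum_{i\in A_s}K_h(\textbf{x}-\textbf{x}_i)$ is recognised as precisely the per-subset estimator $\hat{f}_{A_s}(\textbf{x})$ obtained by applying Equation~\eqref{eq:KernelDen} to the data of $A_s$ alone. Collecting the leftover scalar $1/n$ with $n_s$ then gives $\hat{f}(\textbf{x}) = \sum_{s=1}^S \tfrac{n_s}{n}\,\hat{f}_{A_s}(\textbf{x})$, which is the claimed identity. As a sanity check worth stating, the mixture weights $n_s/n$ are nonnegative and sum to one, so $\hat{f}$ remains a bona fide probability density, being a convex combination of the densities $\hat{f}_{A_s}$; this is consistent with Equation~\eqref{eq:KernelDen} and reassures that no probability mass is created or lost by the reweighting.

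There is no genuine obstacle here — the result is an exact finite-sum rearrangement — so the ``hard part'' is really just making two conventions explicit. The first is the partition assumption above; the second, and the one I would be most careful to flag, is that the decomposition is an \emph{equality} only when the identical kernel $K_h$ (same functional form and, crucially, the same bandwidth $h$) is used for the full-data estimator and for every subset estimator $\hat{f}_{A_s}$. If the bandwidth were re-tuned per subset, the right-hand side would instead \emph{define} a new heterogeneous estimator rather than reproduce the homogeneous one, and the statement would have to be read as a modelling choice rather than a corollary. I would therefore phrase the corollary's hypotheses so that the common-kernel assumption is visible, and then the one-line regrouping above completes the argument.
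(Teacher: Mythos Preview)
Your proposal is correct and follows essentially the same route as the paper: both proofs regroup the defining sum in Equation~\eqref{eq:KernelDen} over the partition, insert the factor $n_s/n_s$, and identify the inner average as $\hat{f}_{A_s}(\textbf{x})$. Your added remarks on the partition hypothesis, the convexity of the weights, and the common-bandwidth caveat are sound clarifications that the paper leaves implicit, but the core argument is identical.
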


\begin{proof}

The sum over all $n$ elements can be expanded as the sum over all subsets and subset elements $\textbf{x}_{sj}$ as follows:

\begin{equation}
\hat{f}(\textbf{x})=\frac{1}{n}\sum_{i=1}^n K_h \left (\textbf{x}-\textbf{x}_i \right)=\frac{1}{n} \sum_{s=1}^S \sum_{j=1}^{n_s} K_h \left (\textbf{x}-\textbf{x}_{sj} \right).
\label{eq:proof-step-a}
\end{equation}

\noindent The summands can be multiplied with $1 = \frac{n_s}{n_s}$. However, $n_s$ is constant within the second summand over $A_s$ and as such the numerator is moved outside the inner sum. Similarly,
$\frac{1}{n}$ can be moved inside the first summand. Based on Equation~\ref{eq:Gaussian-kernel} the kernel density estimation of partition $s$ is derived: 

\begin{equation}
\frac{1}{n} \sum_{s=1}^S \sum_{j=1}^{n_s} K_h \left (\textbf{x}-\textbf{x}_{sj} \right)=\sum_{s=1}^S \frac{n_s}{n}\sum_{j=1}^{n_s} \frac{1}{n_s} K_h \left (\textbf{x}-\textbf{x}_{sj}\right)=\sum_{s=1}^S\frac{n_s}{n} \hat{f}_{A_s}(\textbf{x}).
\label{eq:proof-step-b}
\end{equation}
\end{proof}


Partitions can be reweighted to reflect the consequence of an accident type, for instance insurance compensations for accidents of the $S$ partitions, or the relative importance of
the partitions $S$ to a policy directive. The following equation:

\begin{equation}
f_{\mathtt{R}}(\textbf{x}) = \sum_{s=1}^S a_s \hat{f}_{A_s}(\textbf{x}),
\label{eq:reweighting}
\end{equation}

\noindent is such a generalization, where $\frac{n_s}{n}$ corresponds to weights $a_s$ such that $\sum_{s = 1}^S a_s = 1$. 


\section{Route Discomfort Estimation Model}\label{sec:discomfort-model}

Along with the risk estimation, discomfort estimation can be used to assign a respective weight to each edge of the street network. The \emph{discomfort} of a bike route is defined by the level of physical effort required to traverse the route by bike in terms of length and grade, i.e. slope.   Distinguishing the contributions of the two variables with a closed expression is not straightforward. A vast amount of data and domain experience is required for designing a realistic model. The \textit{IBP index}\footnote{Available at https://www.ibpindex.com (last accessed: May 2019).} is introduced in the context of bike route discomfort. It is generated by an algorithm that analyzes the difficulty of a mountain or bike route and it is backed by years of iteration between measurement and adjustment. This index is currently used by many associations and guides, for instance, the French Federation of Hiking. Although the IBP index is not based on a single closed expression given a series of corrections and fits accounting for multiple other parameters, it can be used to calibrate one such expression.

To make use of the the IBP index, different synthetic tracks of constant grade and specific length are generated and formatted in GPX files required for the input to the IBP index application. A linear dependency with length and an exponential one with grade is inferred after inspecting the outputs and the fitting curves. The discomfort expression is given in the following form:

\begin{equation}\label{comfortFnc}
f(d,x) = \begin{cases}
d\cdot(2\,\exp(15\,x)-1) & \textrm{if } x \geq -0.025 \\
f(d,-0.025) & \text{otherwise,}
\end{cases}
\end{equation}    
          
\noindent where $d$ is the length of a street or route and $x$ is the average grade: $-1\leq x \leq 1$. The expression is set constant below a -2.5\% grade, as grade values lower than this threshold make no difference in effort for a cyclist. This limit is in line with the results from the IBP index calculator. Note that at zero grade, $f(d,0)\propto d$.

\vspace{-0.5cm}

\section{Personalized Route Reccomendation Based on Risk and Discomfort}\label{sec:recommendations}

Personalized route recommendations between a departure and destination point are calculated using Breadth First Search (BFS) over the street network with assigned weights on the streets. These weights are a cyclist's determined combination of the risk and discomfort estimates as follows:

\begin{equation}\label{eq:edge-weights}
w=\alpha\,w_{\mathtt{r}}+(1-\alpha)\,w_{\mathtt{d}},
\end{equation} 

\noindent where $w_{\mathtt{r}}$, $w_{\mathtt{d}}$ are the risk and discomfort weights respectively. An $\alpha=0$ prioritizes routes with minimal discomfort, while $\alpha = 1$ prioritizes routes with minimal risk. 


\section{Data Science and Experimental Methodology}\label{sec:methodology}

This section introduces a realization of the proposed risk estimation model. It also introduces a software artifact for bike route recommendations. Figure~\ref{DataFlowFig} outlines the bike riding risk assessment in Z\"urich city. A data science pipeline is designed that consists of the following stages: \textbf{Stage 1}: Extraction and categorization of bike accident data from the Swiss GeoAdmin\footnote{Available at https://map.geo.admin.ch/?topic=vu\&lang=de\&bgLayer=ch.swisstopo.pixelkarte-grau\&layers=ch.astra.unfaelle-personenschaeden\_alle\&layers\_timestamp=\&catalogNodes=1318 (last access: May 2019).} API\footnote{Available at http://api.geo.admin.ch (last access: May 2019)} to obtain $\{\textbf{x}_i\}_{i=1,\-\dots,\-n} = \{A_s\}_{s = 1,\-\dots,\-S}$. \textbf{Stage 2}: Extraction and processing of GPS trip traces from the OpenStreetMaps (OSM) API\footnote{Available at https://www.openstreetmap.org/export (last access: May 2019)}. \textbf{Stage 3}: Extraction and processing of the Z\"urich street network from the Swiss GeoAdmin~\footnotemark[5] API\footnotemark[6]. \textbf{Stage 4}: The kernel density estimation on traces and labeled accidents to calculate $f_T(\textbf{x})$ and $f_{A_s,T}(\textbf{x})\hspace{0.1cm} \forall s$. \textbf{Stage 5}: Calculation of $\hat{f}_{A_s|T}(\textbf{x}) \forall s$ according to Equation~\ref{eq:ConditionalDens} by taking the ratio of $\hat{f}_T(\textbf{x})$ and $\hat{f}_{A_s,T}(\textbf{x}) \forall s$. \textbf{Stage 6}: Application of insurance recompensation data\footnote{Available at https://www.bj.admin.ch/dam/data/bj/gesellschaft/opferhilfe/hilfsmittel/leitf-genugtuung-ohg-d.pdf (last access: May 2019).} from the Swiss Federal Office of Justice to obtain $f_{\mathtt{R}}(\textbf{x})$ from $\hat{f}_{A_s|T}(\textbf{x})$ according to Equation~\ref{eq:reweighting}. \textbf{Stage 7}: Interpolation of $f_{\mathtt{R}}(\textbf{x})$ onto the processed street network.

\begin{figure}[!htb]
\centering
\includegraphics[width=0.9\linewidth]{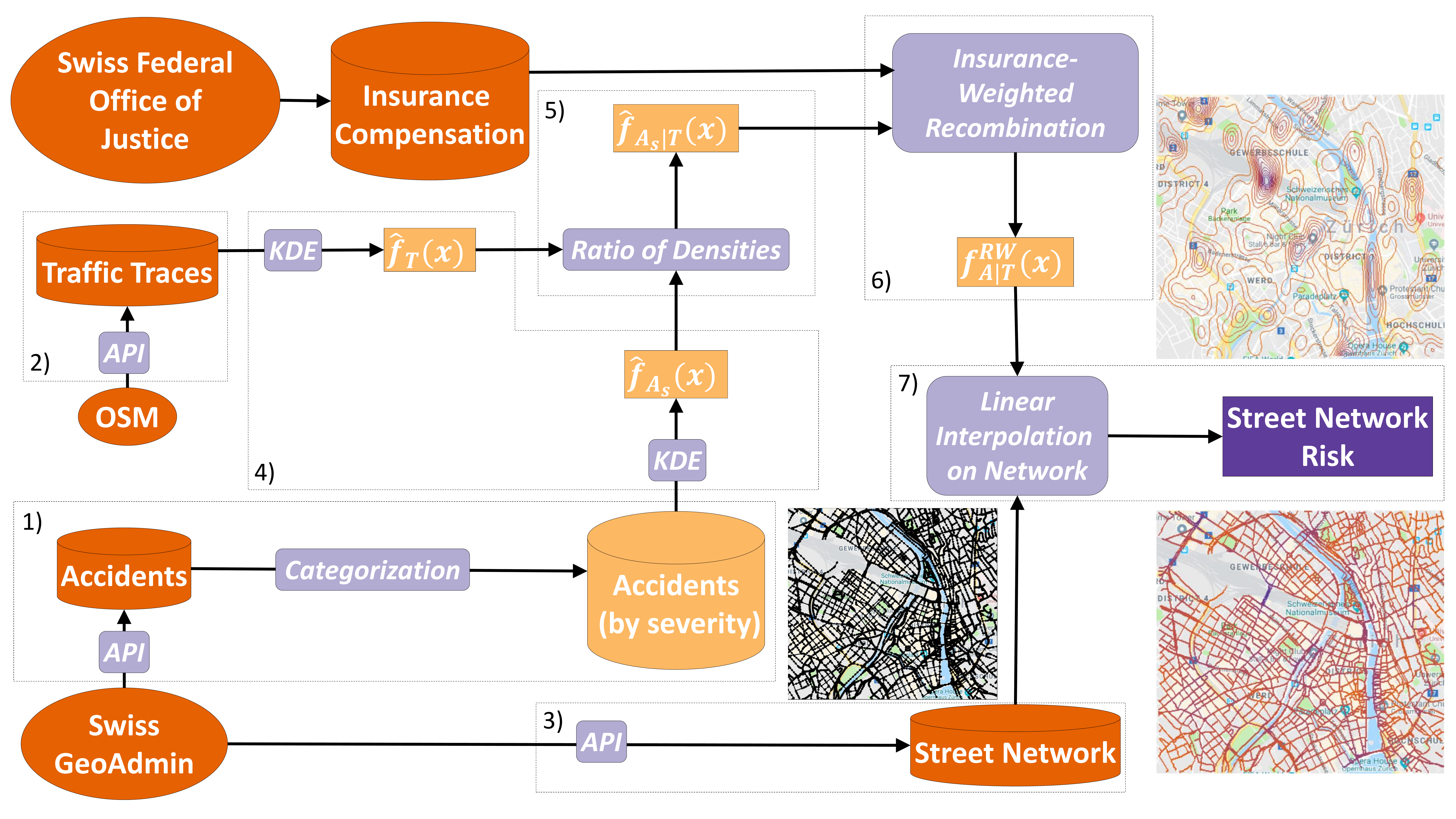}
\caption{An overview of the data science pipeline for bike riding risk assessment.}\label{DataFlowFig}
\end{figure}

\subsection{Stage 1: accident data extraction}\label{subsec:accidents}

The Swiss Confederation\footnotemark[5] web portal has an interactive map of Switzerland with several spatial layers of publicly available data. One of these layers compiles and displays accidents involving bikes between 2011 and 2017. The data are collected by the Swiss Federal Roads Office from electronic police reports\footnote{Available at https://www.astra.admin.ch/astra/de/home/dokumentation/unfalldaten/grundlagen/\-prozess.html (last access: May 2019)}. Together with the localization of the accidents, this layer provides an accident specification that includes the date and time, severity, cause, and street type. These features are visualized on the map, which also serves as the basis of an API service for batch data extraction\footnotemark[6].

The API service `identify' is used for data extraction. It generates and returns a list of at most 200 elements from a layer satisfying a geographic geometry specified using ESRI syntax\footnote{This data is defined using the Swiss projection coordinates used by the Swiss GeoAdmin system: https://www.swisstopo.admin.ch/en/maps-data-online/calculation-services.html (last access: May 2019). All values are converted to the more common WGS84 system once extracted using the approximation.}. For simplicity and scalability, the geometry used in this investigation is a bounding box specified by its horizontal and vertical extents as shown in Figure~\ref{Plotgridalgo}a. It is delimited\footnote{The `identify' service requires the specification of other map features that affect the inclusion tolerance of any bounding box: (i) the \emph{map extent} parameter that is the entirety of Switzerland set with 312250, -77500, 1007750, 457500 and (ii) the \emph{image display} parameter chosen to be 1391, 1070, 96.} by the latitudes $(47.3650,47.3886)$ and longitudes $(8.5141,8.5523)$. This Z\"urich region is chosen for its central location and the density of recorded trips on OpenStreetMaps (OSM), an indication of frequent trips.


\begin{figure}[!htb]
\centering
\subfloat[Selected data region.]{\includegraphics[width=0.275\linewidth]{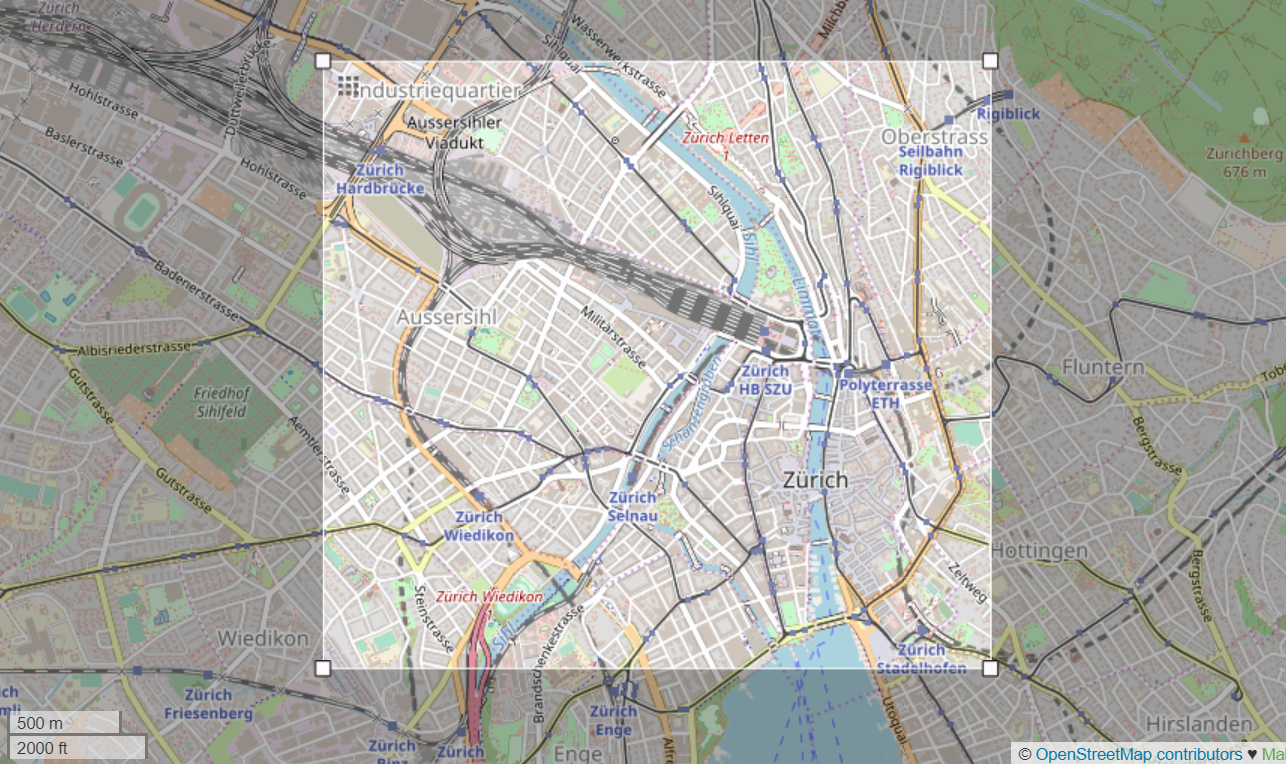}}
\subfloat[Iteration 1]{\includegraphics[width=0.178\linewidth]{grid-01}}
\subfloat[Iteration 2]{\includegraphics[width=0.178\linewidth]{grid-02}}
\subfloat[Iteration 3]{\includegraphics[width=0.178\linewidth]{grid-03}}
\subfloat[Iteration 4]{\includegraphics[width=0.178\linewidth]{grid-04}}
    \caption{The selected region and an example of subdivisions using a threshold of 1.}\label{Plotgridalgo}
\end{figure}


To meet the limit of 200 extracted elements, consecutive subdivisions of the bounding box are performed prior to data extraction so that each subdivision in which the final data extraction\footnote{To ensure all data points are obtained including those on the boundaries, the final extraction windows applied to each subdivision are selected with deliberate overlap at the boundaries. This process created duplicate records that are removed using the unique ID associated with each element.} takes place contains 200 elements or fewer. Algorithm~\ref{algo:grid} outlines the subdivision logic
and Figure~\ref{Plotgridalgo}b-\ref{Plotgridalgo}e displays this algorithm schematically using a threshold of 1 rather than 200 for simplicity.

\vspace{-0.5cm}

\begin{algorithm}
  \caption{Grid subdivision logic of the selected region.}\label{algo:grid}
  \begin{algorithmic}
    \STATE set grid subdivision to 1
    \WHILE{number of elements in any subdivision $>$ 200}
    \STATE divide each subdivision into quarters
    \ENDWHILE
    \RETURN Subdivision boundaries
\end{algorithmic}
\end{algorithm}

\vspace{-0.5cm}

The `severity' field from the extracted data is used to categorize the different accidents into \emph{light}, \emph{severe} and \emph{death} injuries.

\subsection{Stage 2: trip data extraction}\label{subsec:trips}

Transit data from users' GPS traces are downloaded in XML format from OpenStreetMaps\footnotemark[7] and treated as a sample of $f_T(\textbf{x})$. A 5\% of the traces are labeled by means of transport, from which all non-bike traces are removed to improve the quality of estimation of $f_T(\textbf{x})$. For unlabeled traces, travel homogeneity between methods of transit is assumed as streets included in the selected data window are primarily multi-use, and so a high volume of general usage implies high bike usage as well. 

This assumption is also motivated by the very limited available data sources that can be used in the scope of this paper. For instance, Open Data Z\"urich\footnote{Available at https://data.stadt-zuerich.ch/
(last access: May 2019).} has more precise bike traffic data, but they lack resolution and cannot be used for kernel density estimation. Although the introduced assumption imposes limitations on the generalization of the performed analysis, a comparison between $\hat{f}_{A,T}(\textbf{x})$ and $\hat{f}_{A|T}(\textbf{x})$ in Figure~\ref{PlotNormalizationComp} shows smooth adjustments rather than drastic changes to the general trends, suggesting that the data pattern is not lost due to the imposed imprecision. 

\subsection{Stage 3: street network data extraction}\label{subsec:network}

The Z\"urich street network required for $f_{A|T}(\textbf{x})$ is extracted from the Swiss Confederation web portal\footnote{Available at https://map.geo.admin.ch/?topic=ech\&lang=en\&bgLayer=ch.swisstopo.pixelkarte-farbe\&catalogNodes=457,458\&layers=ch.swisstopo.swisstlm3d-strassen (last access:May 2019)} as shown in Stage 1. It is stored by the coordinates (latitude, longitude, altitude) of points along the city streets and the street segments linking these points. While the street points are not equidistant, they are always present at intersections. Moreover, the large number of points defining the street network makes computations, such as graph search used for route recommendations, very expensive. The computational load is reduced by keeping the intersection points only, which is roughly 10\% of the total street data elements.

After the data extraction and pre-processing, the extracted data are modeled as a graph structure. Each street network point is assigned to a node and the points are connected by the edges representing street segments. In terms of the accuracy in this process, the source data are already grouped into street segments and the coordinates match at the intersections, with some tolerance in the last decimal digits.

\subsection{Stage 4-7: risk estimation}\label{subsec:riskest}

The kernel density estimation is performed using the \texttt{kde2d} function in the \texttt{stats} package\footnote{Available at https://www.r-project.org (last access: May 2019).} of \texttt{R}. Different choices of $h$ result in very different estimated densities as suggested in Equation~\ref{eq:KernelDen}. High values of $h$ result in coarse and uninformative densities, while low values result in densities that cannot effectively estimate beyond the immediate neighborhood of $\textbf{x}_i$. This is visually presented in the two examples of Figure~\ref{PlotBandwidths}a and~\ref{PlotBandwidths}b. A bandwidth of 0.003 degrees, in WGS84 coordinates, is selected empirically as it provides a good trade-off. The estimated density contours across the entire accident data are shown in Figure~\ref{PlotBandwidths}c.

\vspace{-0.5cm}

\begin{figure}[!htb]
\centering
\subfloat[Bandwidth=0.001]{\includegraphics[width=0.33\linewidth]{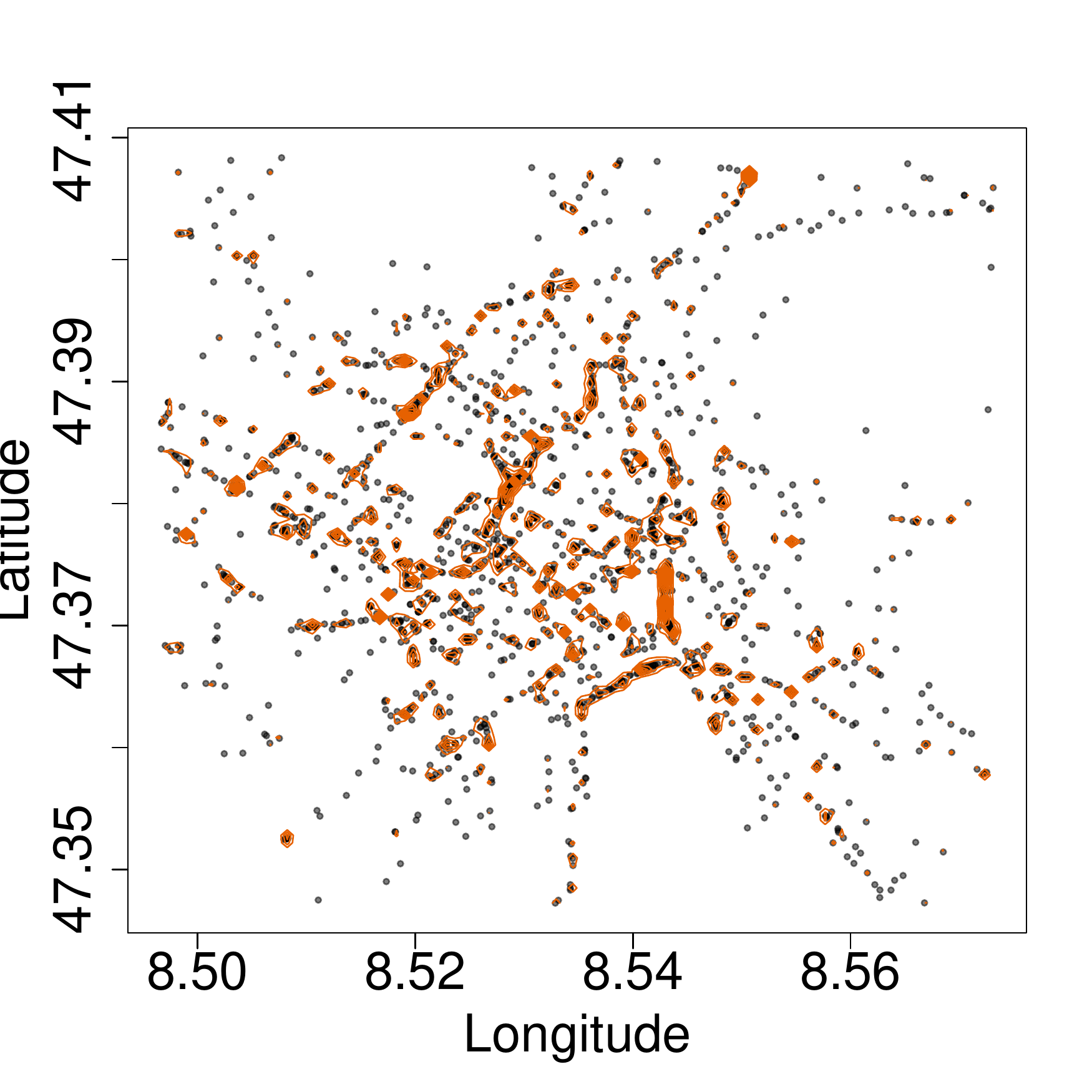}}
\subfloat[Bandwidth=0.01]{\includegraphics[width=0.33\linewidth]{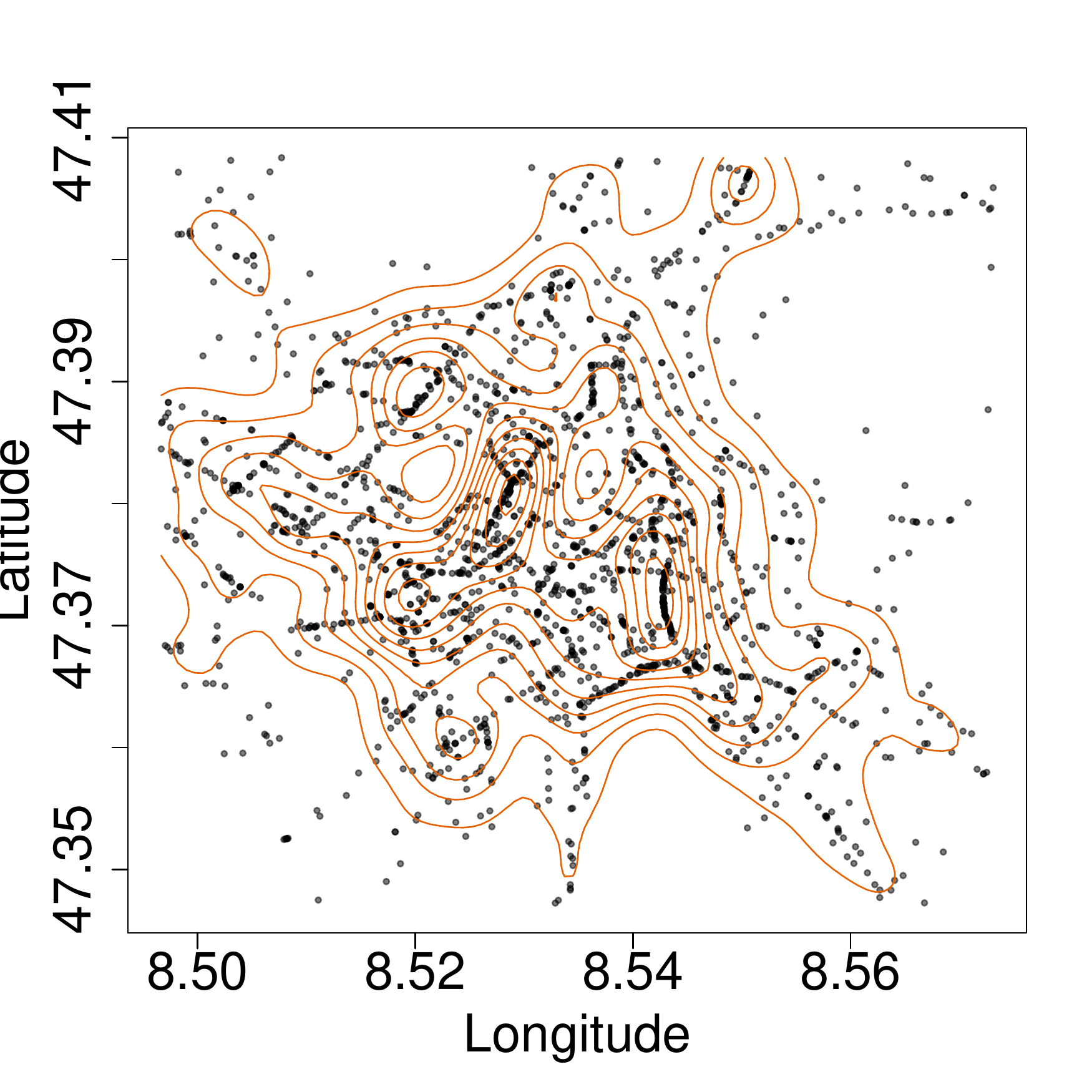}}
\subfloat[Bandwidth=0.003]{\includegraphics[width=0.33\linewidth]{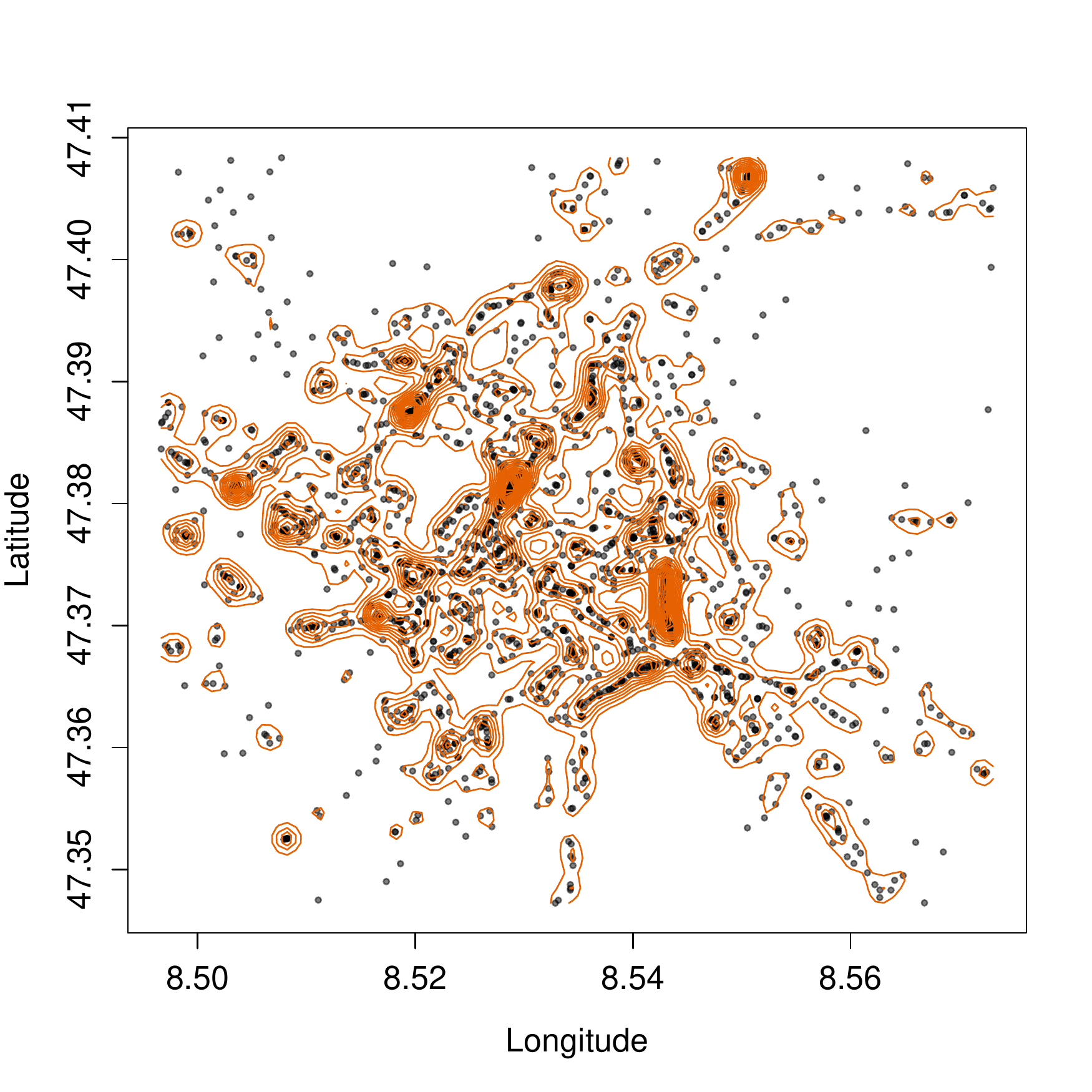}}
\caption{Bicycle accident density contours for different bandwidth choices.}\label{PlotBandwidths}
\end{figure}

\vspace{-0.5cm}

The density $f_{A_s,T}(\textbf{x})$ of bike accidents is calculated for each severity level $s$. Similarly, $f_T(\textbf{x})$ is estimated using the OpenStreetMap traces to finally calculate Equation~\ref{eq:KernDenEqReweight}. As the studied area is not perfectly square, a grid with 560 horizontal and 440 vertical divisions is imposed and estimations are made at the intersections of the grid lines. This asymmetry ensures that the evaluation positions are equispaced in terms of WGS84 coordinates of latitude and longitude. 

To generate $f_{\mathtt{R}}(\textbf{x})$, a relative weighting of 1:6:6 for light injuries, severe injuries, and death respectively is used to recombine the partition densities. These weights are based on the following: (i) Insurance compensation policy data of 5'000:30'000:100'000 CHF for the respective accident severity levels\footnotemark[8]. (ii) The relative factor of 20 for death is scaled down to 6 to prevent over-polarized contours with extreme peaks. 

Low traffic areas, as measured by the volume of OpenStreetMap traffic, may result in contour peaks despite a similar accident rate per trip for these areas, i.e. the ratio of $\hat{f}_{A_s,T}(\textbf{x})/\hat{f}_T(\textbf{x})$ from Equation~\ref{eq:ConditionalDens}. Non-normalized and normalized density contours are compared in Figure~\ref{PlotNormalizationComp}a and~\ref{PlotNormalizationComp}b. The contour peaks of the latter are less extreme than those in Figure \ref{PlotNormalizationComp}a, while the dominant peaks remain the same and distinguishable in both versions, suggesting that normalization has a desirable effect.

\vspace{-0.5cm}

\begin{figure}[!htb]
\centering
\subfloat[Non-normalized contours.]{\includegraphics[width=0.257\linewidth]{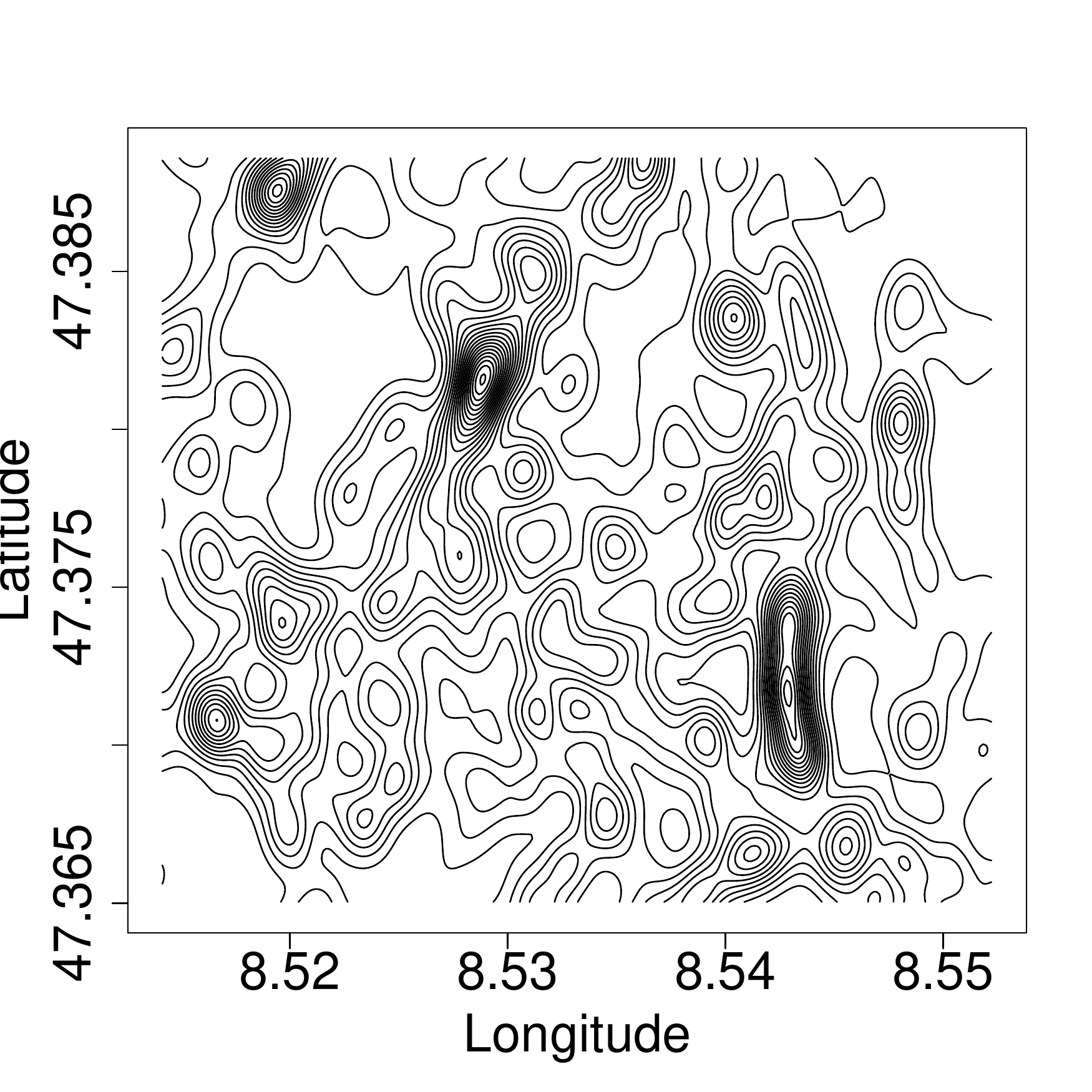}}
\subfloat[Normalized contours.]{\includegraphics[width=0.257\linewidth]{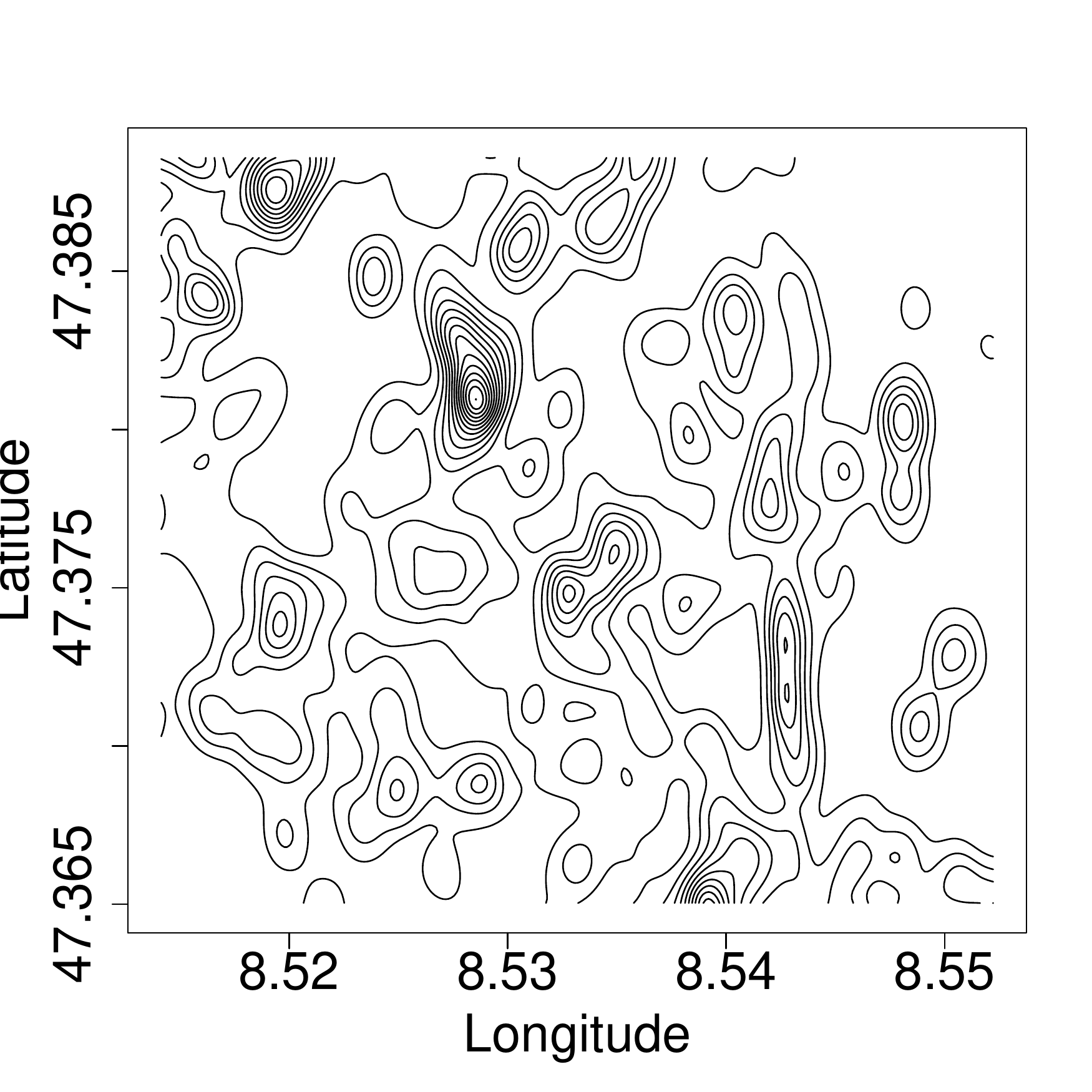}}
\subfloat[Non-scaled interpolation.]{\includegraphics[width=0.256\linewidth]{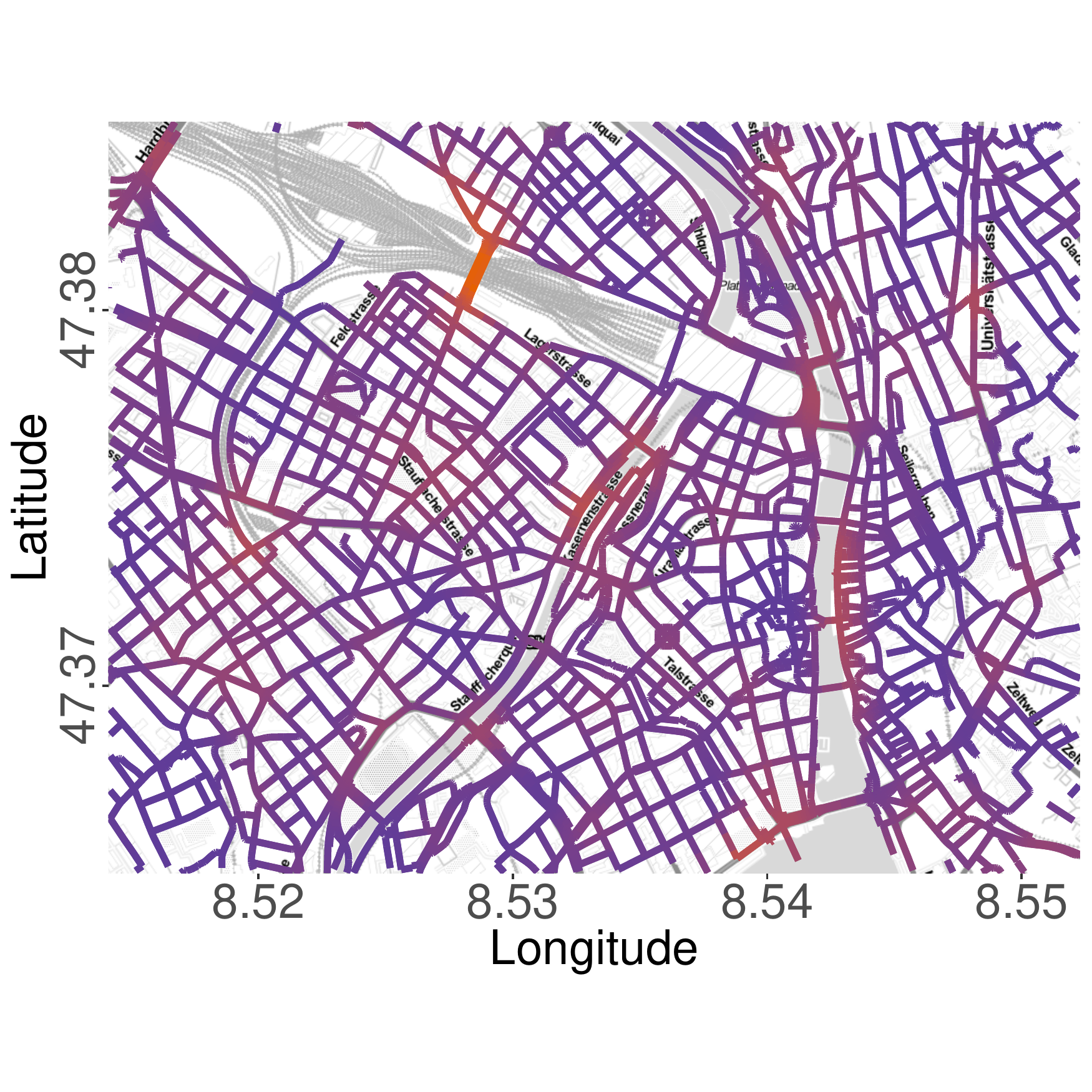}}
\subfloat[Scaled interpolation.]{\includegraphics[width=0.256\linewidth]{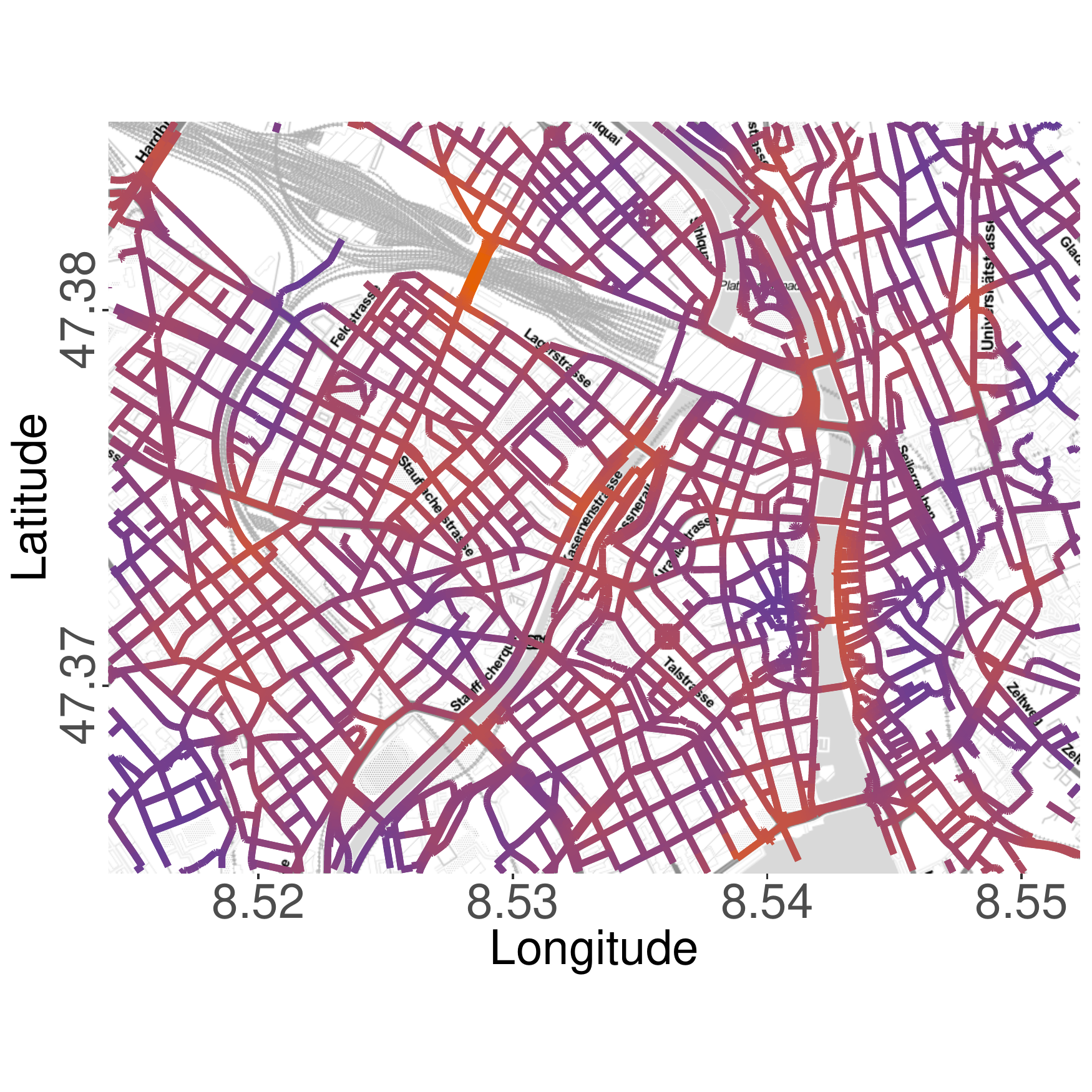}}
\caption{Density contours and interpolated network risks in Z\"urich. Orange hue denotes higher risk.}\label{PlotNormalizationComp}
\end{figure}

\vspace{-0.5cm}

Moreover, note that the estimation window in Figure \ref{PlotBandwidths} extends beyond the specified studied region. Density estimation has highly variable boundary behavior due to the
abruptly exclusion of points at the window edges. This boundary effect, further exacerbated by using the ratio of the densities estimated over the window, results in spuriously peaked boundary estimates of $f_{A_s|T}(\textbf{x})$. An extended window is introduced to estimate the densities restricted back and normalized to the studied region. 

At the final stage, $f_{\mathtt{R}}(\textbf{x})$ is mapped to the street network using simple linear interpolation. The resulting normalized risk is plotted on a map of Z\"urich using the \texttt{ggmap}~\cite{ggmap} and \texttt{ggplot2}~\cite{ggplot2} packages in \texttt{R}. The interpolated risks on the street network are displayed in Figure~\ref{PlotNormalizationComp}c.

Immediately apparent is the relatively high risk in two vibrantly orange areas near Hardbr\"ucke\footnote{Available at https://en.wikipedia.org/wiki/Hardbrücke (last access: May 2019).} and Langstrasse\footnote{Available at https://en.wikipedia.org/wiki/Langstrasse (last access: May 2019).}. These areas are, by a wide margin, the most dangerous in Z\"urich and the magnitude of their risk makes visual risk inspection throughout the rest of Z\"urich challenging. A Box-Cox power transformation~\cite{BoxCox} with an exponent of $\frac{1}{2}$ is applied to the data as shown in Figure~\ref{PlotNormalizationComp}d. The variation in risk is more visually apparent and so it is easier to distinguish higher and lower risk areas.

The risk estimation method illustrated in this paper relies on the quality of the reported accident data. However, it is likely that accidents are under-reported to police, especially those that do not result in injuries or property damage. The following reasoning is made about these unreported accidents: (i) As unreported accidents are expected to be of light severity, they are not expected to significantly increase the estimated risk values. Moreover cyclists are more likely to trust data that are directly linked to reported accidents and convey a consequence or threat. (ii) Under the assumption that unreported accidents are homogeneously distributed in the studied area, their influence on the estimated density contours is negligible.

\subsection{A software artifact for personalized bike route recommendations}\label{subsec:artifact}

A software artifact is introduced to calculate route recommendations on a weighted graph extracted from the street network. The departure and destination points are matched to nodes of the graph. The matching is performed by the distance minimization between the points and nodes. The graph is implemented as a Python object and the software executes the BFT algorithm\footnote{For efficiency, the different explored paths have the accumulated weight assigned and are terminated whenever their accumulated weight exceeds the value of a path that has already reached the destination. They are also terminated whenever they reach a node previously visited by any other path, whose weight at that node is lower than the one of the current path. Once the minimal path is found, the list of nodes is returned together with the calculation of the accumulated risk and discomfort weights.} as illustrated in Section~\ref{sec:recommendations}.

A user interface with an interactive map is designed and implemented in the Python \textit{tkinter} library as shown in Figure~\ref{PlotFreqComp}a. It is used to acquire the required user input to generate a personalized recommended route, i.e. the $\alpha$ weight, the departure and destination point. The user can click on the map to determine these points, as well as intermediate route points, whose latitude and longitude coordinates are computed in the background. The recommended route for a given $\alpha$ is displayed on the map. The user input and the calculated values, i.e. total risk, discomfort and the points that form the route, can be exported in a \texttt{.txt} file.

Two evaluation methods of the software artifact and the route recommendations are introduced: (i) 24 typical bike routes by 8 individuals, who cycle regularly in the studied area, are collected via the software artifact. Data collection is performed via the software artifact by clicking the interactive map several times to form a route that can be then exported as illustrated earlier. These routes are referred to as \emph{baseline routes} and they are compared to the route recommendations for the same departure and destination points and different $\alpha$ values. (ii) A number of 2000 random departure and destination points is generated on the map. For each pair, three route recommendations are generated for $\alpha=0$, $\alpha=0.5$ and $\alpha=0.75$.  Based on these generated routes, the utilization of the street network can be compared for each of these three cases. In this way, the overall risk and discomfort estimation that stems from the route recommendations is mapped on the studied region.


\section{Experimental Evaluation}\label{sec:evaluation}

This section illustrates a data analysis of the accidents and evaluates the bike route recommendations.

\subsection{Accident analysis}\label{subsec:accanal}

The total number of reported accidents in the specified region and time frame is 1305: 1023 light injuries, 277 severe injuries, and 5 deaths. Because of the low number of death events, they are included in the group of severe injuries. It is likely that circumstances separating a severe injury from a death may have little to do with accident features and more with the constitution of the victim. Figure~\ref{PlotYear}a illustrates the yearly evolution of accidents. From 2013 to 2017, an increase of approximately a 50\% is observed. Figure~\ref{PlotYear}b shows the probability of accidents resulting in severe injuries. The values remain within the expected variation\footnote{The error bars are determined by assuming a binary (Bernoulli), distribution  between light and severe injuries. For the probability of a severe injury at a specific year, $p$, the data-based value is used as a good approximation to the true value. Therefore, the standard deviation of the mean has the following form:  $\sqrt{p(1-p)/N}$, where $N$ is the total number of accidents for each year.}.

\vspace{-0.5cm}

\begin{figure}[!htb]
\centering
\subfloat[Number of accidents.]{\includegraphics[width=0.23\linewidth]{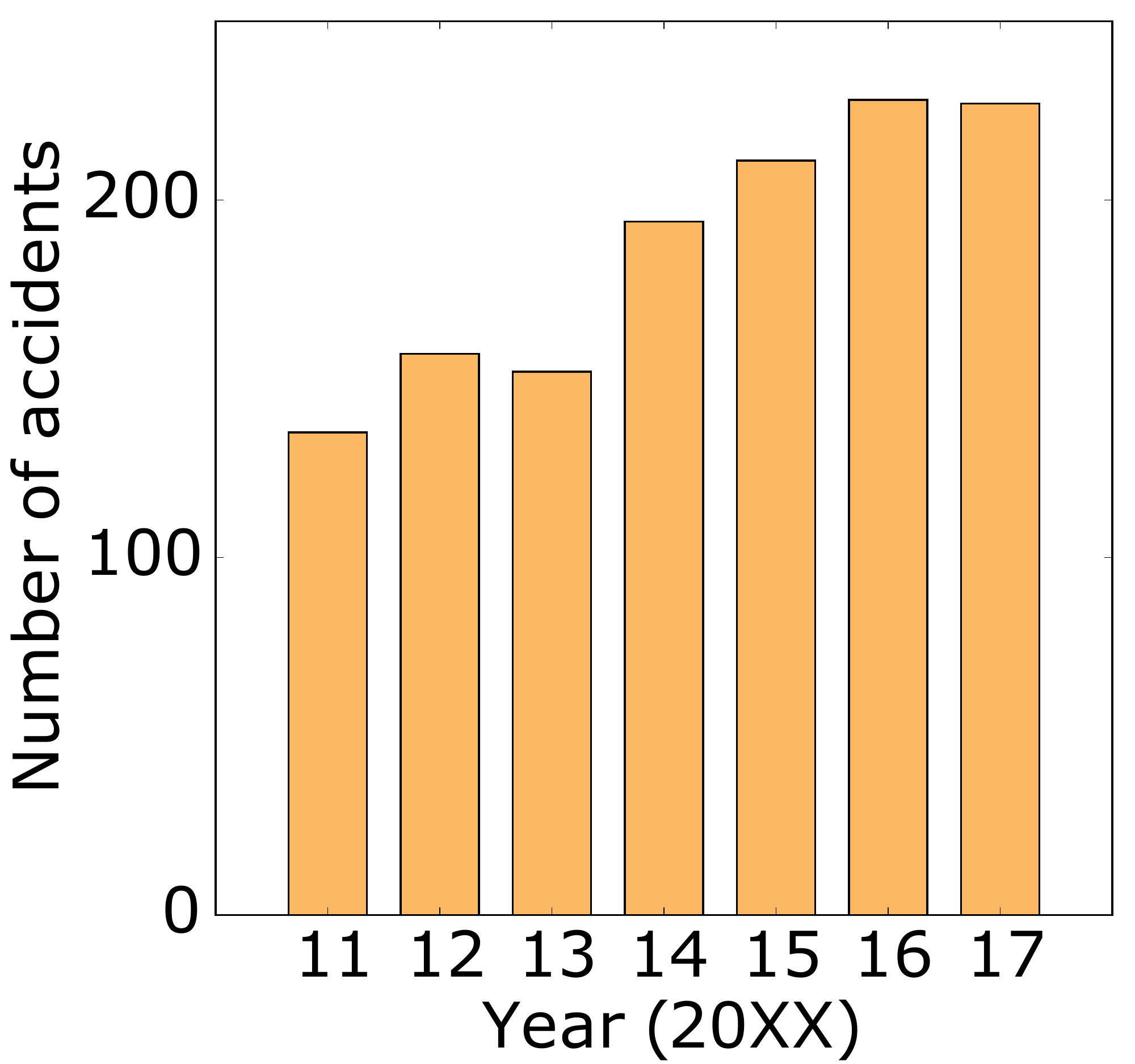}}\label{PlotAccidentsYear}
\subfloat[Severity of accidents.]{\includegraphics[width=0.23\linewidth]{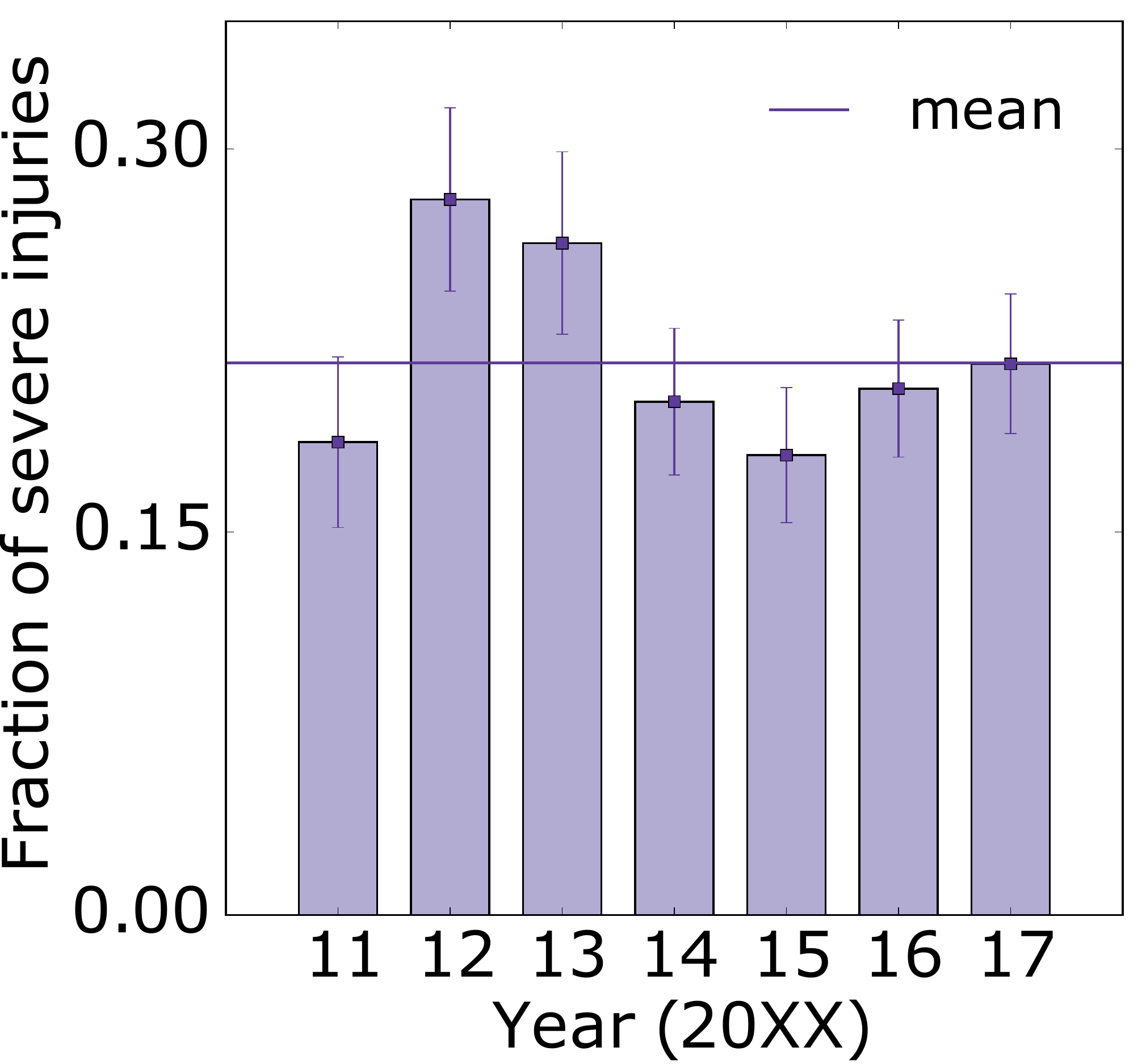}}\label{PlotPercentageSevereYear}
\subfloat[Accidents vs. temperature.]{\includegraphics[width=0.252\linewidth]{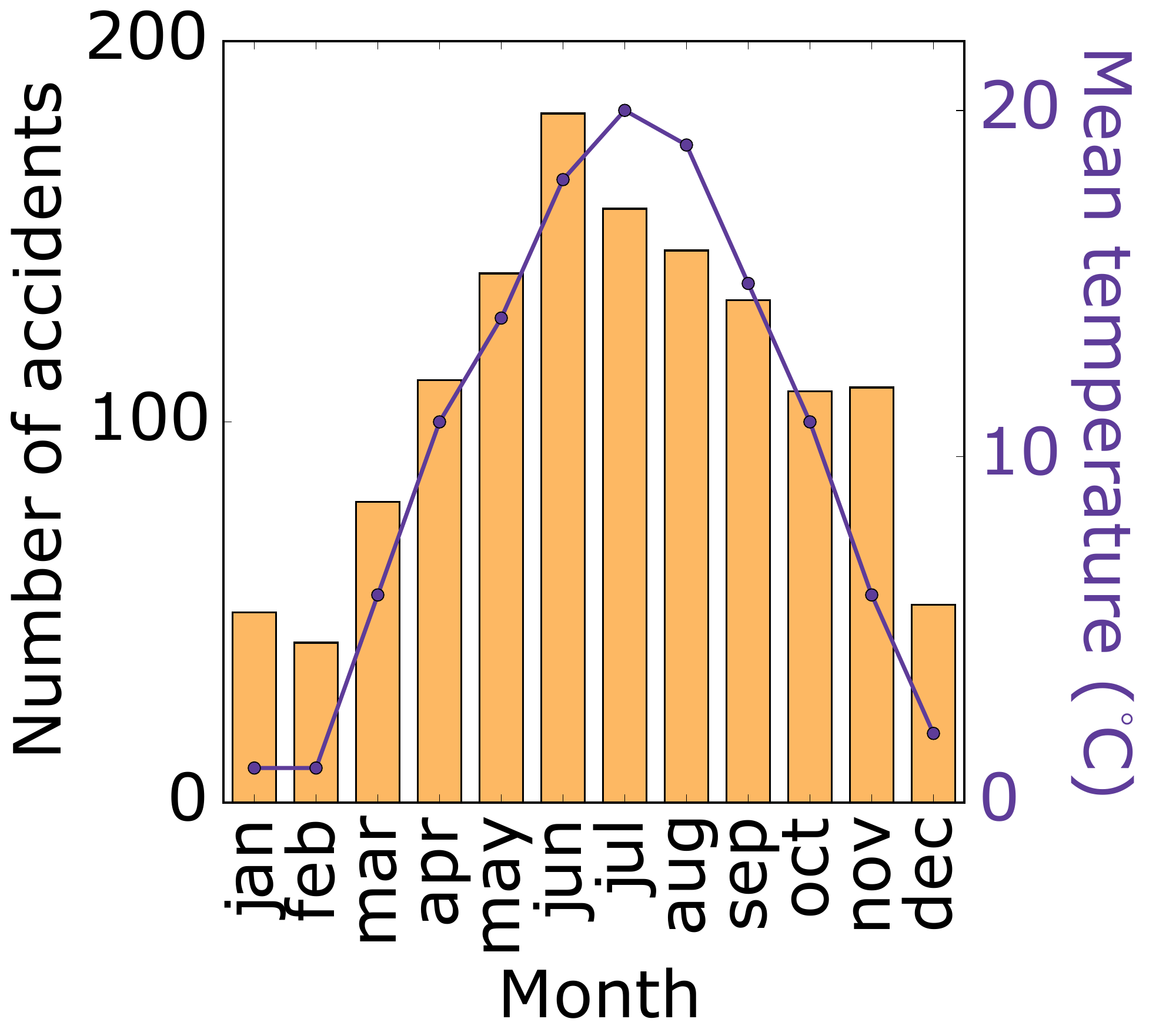}}\label{PlotAccidentsMonth}
\subfloat[Severity vs. precipitation.]{\includegraphics[width=0.252\linewidth]{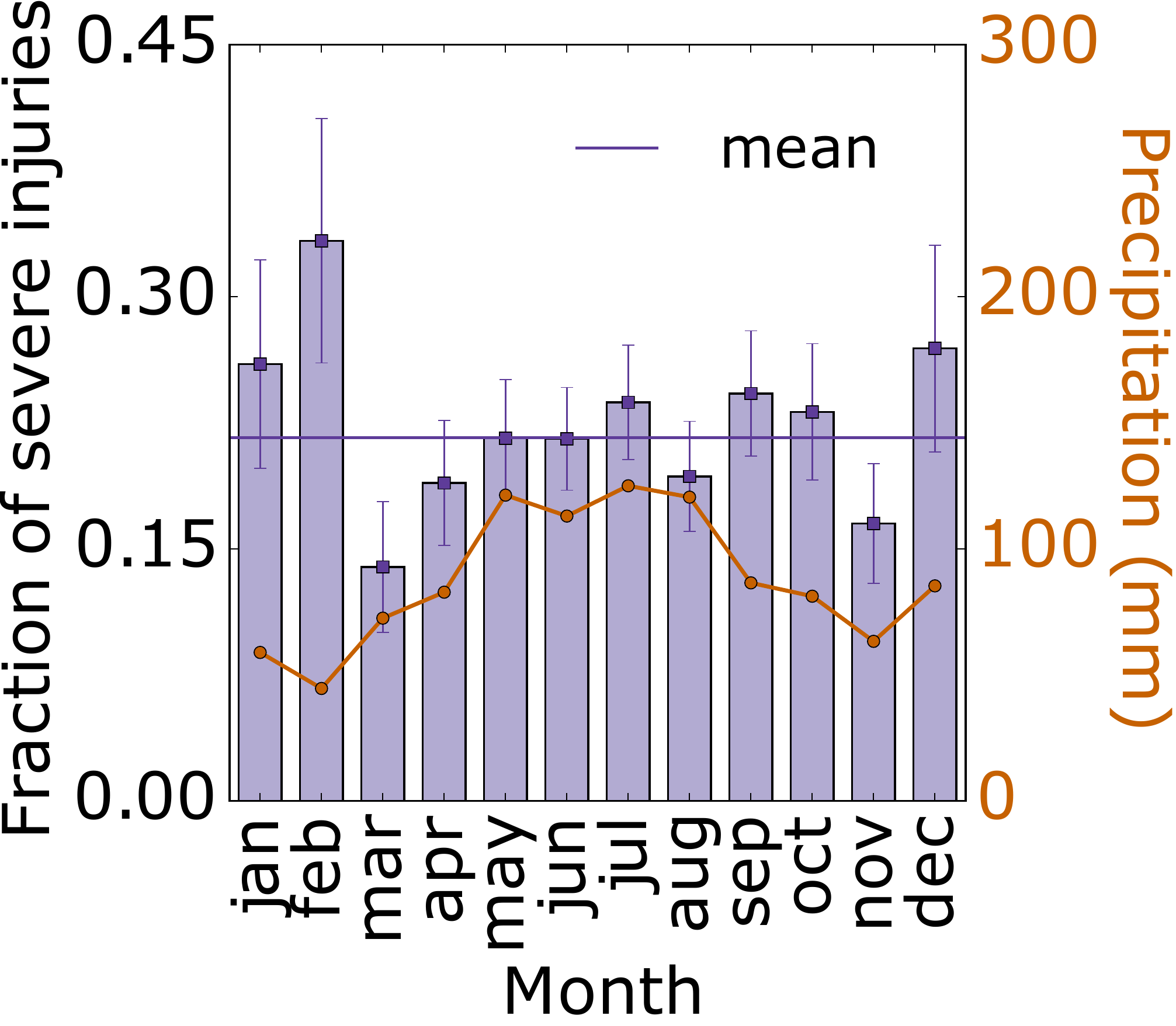}}\label{PlotPercentageAccidentsMonth}
\caption{Yearly and monthly analysis of accident and weather data.}\label{PlotYear}
\end{figure}

\vspace{-0.5cm}

The total number of accidents per month across all years vs. the average temperature per month\footnote{Available at https://www.timeanddate.com/weather/switzerland/zurich/climate (last access: May 2019).} is shown in Figure~\ref{PlotYear}c. That fewer accidents occur in colder months reflects the temporal transit patters of citizens in Z\"urich: choosing public transport or driving by car during the colder months of the year to avoid the discomfort of cycling in the cold. On the other hand, the severity of the accidents in relative number is higher during the winter months, as shown in Figure~\ref{PlotYear}d. Given the highly diverse grade, i.e. slope, of the studied region, snow and frozen street surfaces in winter are likely to explain this observation. Precipitation also seems important. Summer shows on average up to 70\% higher precipitation~\footnotemark[20] than the months of March and November during which the lowest fractions of severe injuries are observed. 

Figure~\ref{PlotType}a illustrates the relation between the accidents and their causes. Self-caused accidents are predominant, a 40\% of of all accidents are self-caused. Additionally, they show the highest severity, 28\%, as shown in Figure~\ref{PlotType}b. Head-on collisions and accidents on crossing lanes follow in severity, suggesting that intersections entail higher risk overall. Given the predominance of self-caused accidents, the potential to improve bike riding safety via warnings and route recommendations is apparent. Such risk communication can improve cyclist awareness and simultaneously contribute to greater cycling confidence in tourists and new cyclists. 

\begin{figure}[h!]
\centering
\subfloat[Accidents per type.]{\includegraphics[width=0.235\linewidth]{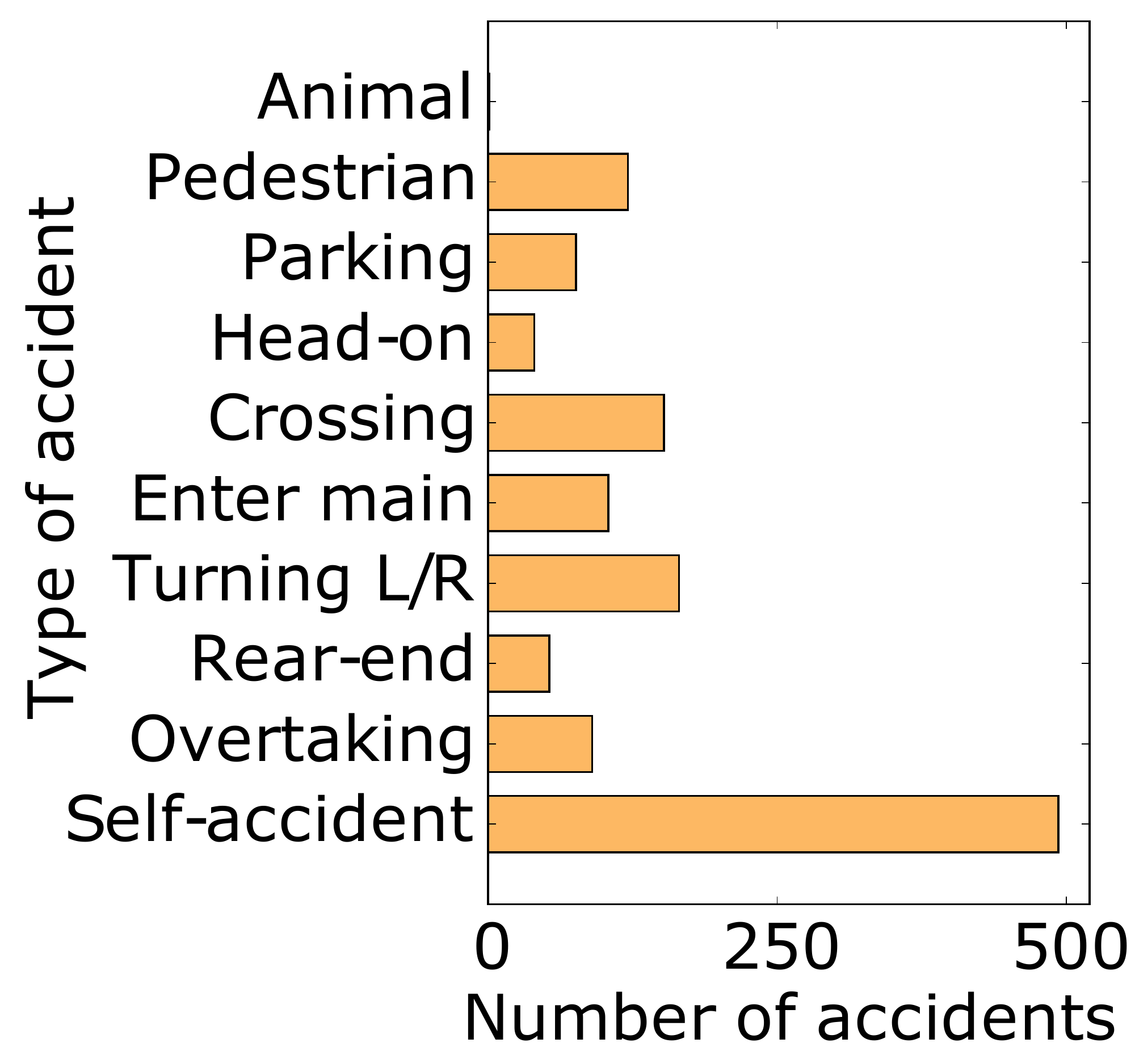}\label{PlotAccidentsType}
\includegraphics[width=0.235\linewidth]{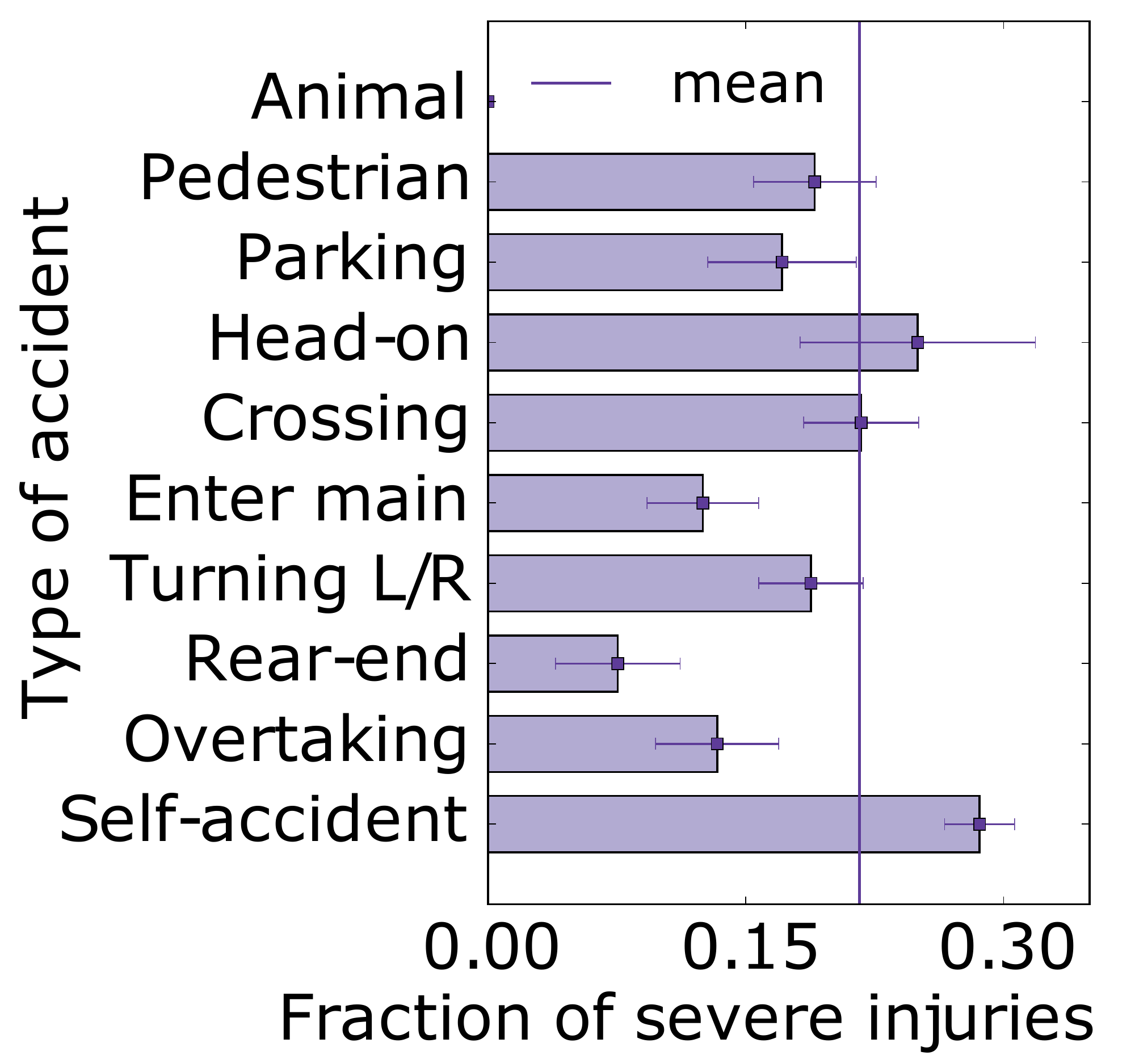}}\label{PlotPercentageSevereType}
\subfloat[Accidents over time and day of the week.]{\includegraphics[width=0.256\linewidth]{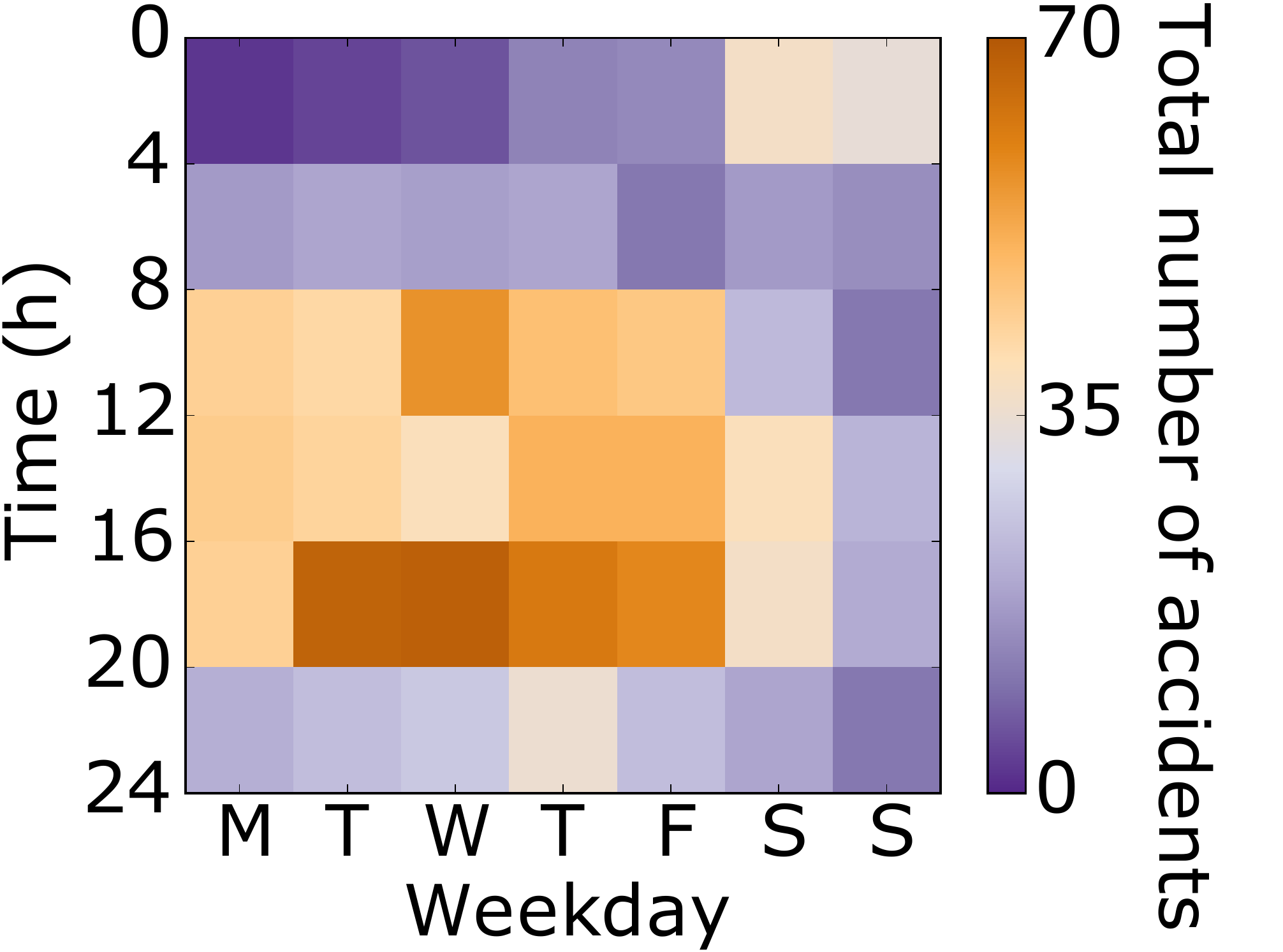}\label{PlotAccidentsHourWeekday}\includegraphics[width=0.256\linewidth]{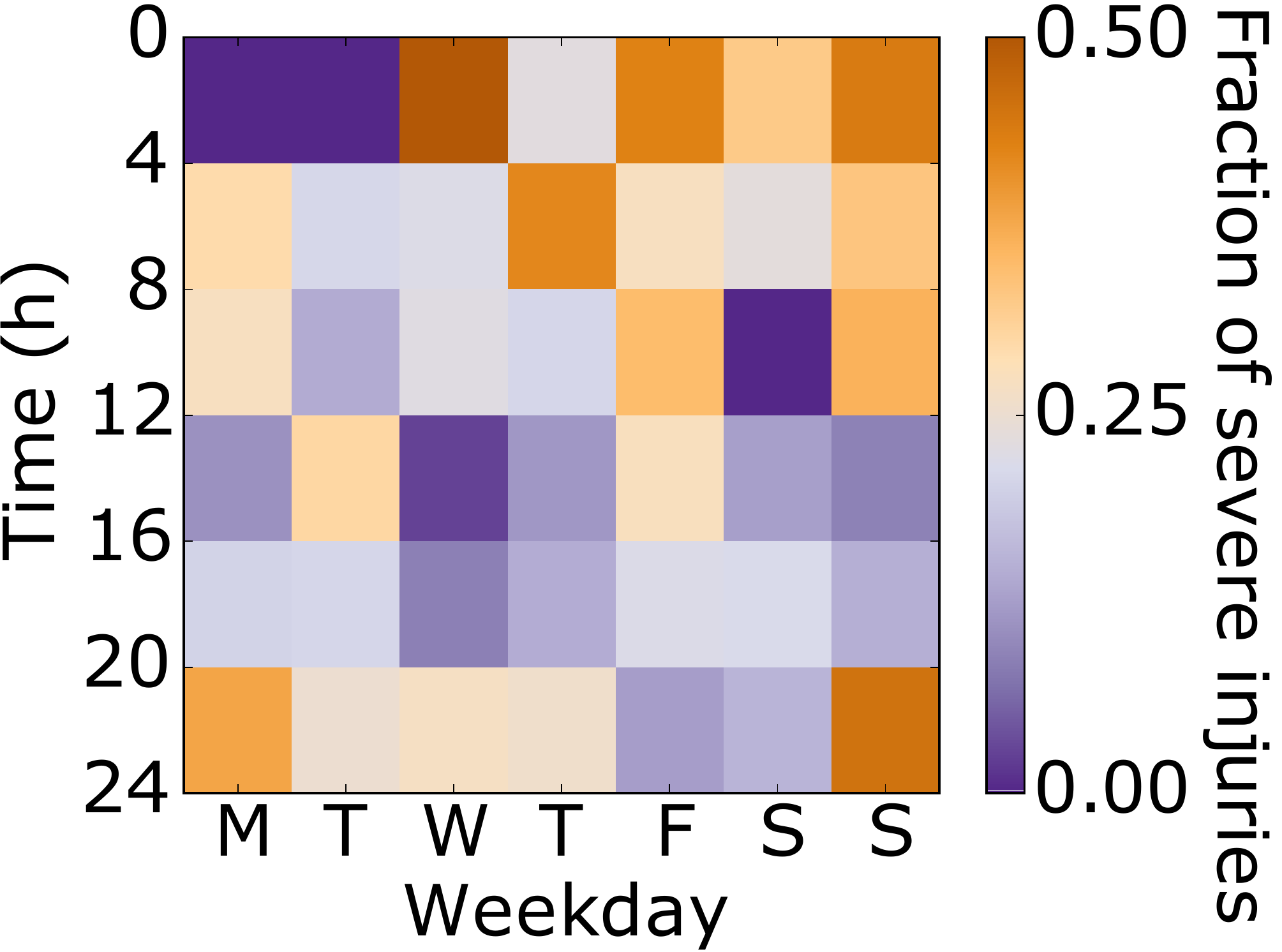}}
\caption{Type of accidents and their time of occurrence.}\label{PlotType}
\end{figure}

\vspace{-0.5cm}

Figure~\ref{PlotType}c illustrates the time of accident occurrences during weekdays, which show significantly more accidents than weekends. Weekday accidents usually happen early morning and late afternoon, suggesting that accidents happen during commuting times, i.e. home-work and vice versa. During weekends accidents mainly appear during the following times: (i) Saturday afternoon, probably corresponding to shopping/outings. (ii) Early morning hours on Saturday and Sunday, suggesting accidents related to poor visibility conditions, fatigue, and alcohol consumption. Figure~\ref{PlotType}d shows that the latter are the most severe ones.

\subsection{Bike route recommendations: safety vs. discomfort}\label{subsec:recommendations}

Compared to the bike routes of Google Maps, the recommended routes of the designed software artifact are similar in overall for $\alpha=0$ in Equation~\ref{comfortFnc}. Nevertheless, they are in general slightly shorter in distance and longer in time than those of Google Maps. This difference can be attributed to additional information potentially utilized by Google Maps, for instance information about the street infrastructure, drivers' and cyclists' route preference as well as traffic lights. 

Figure~\ref{routeComparer} shows the relative improvement of risk and discomfort between the 24 baseline routes and the recommended routes for different $\alpha$ values. A clear trade-off between safety and comfort is observed. Optimal values of $\alpha$ lie between 0.2 and 0.4.


\begin{figure}[!htb]
\centering
\includegraphics[width=0.33\linewidth]{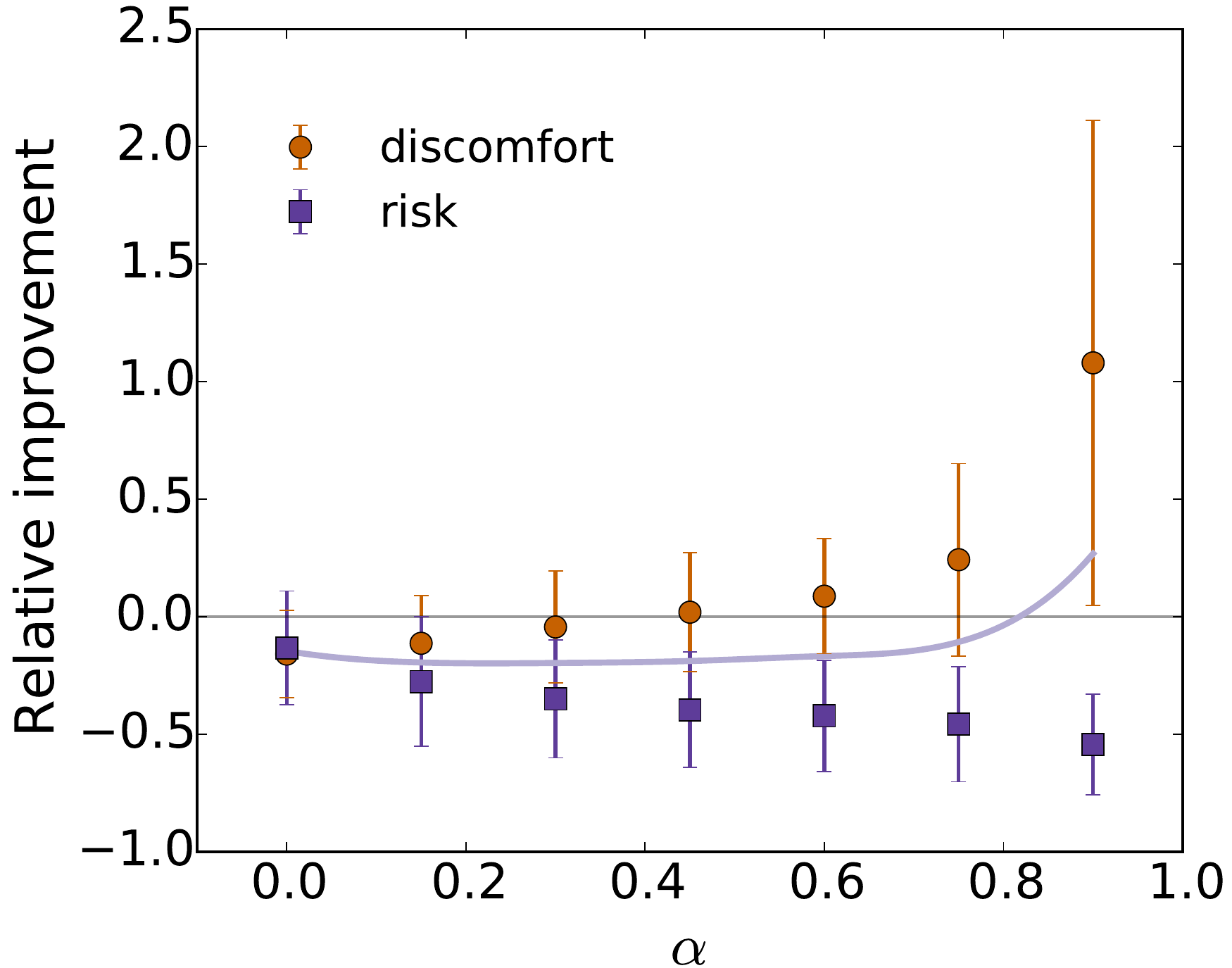}
\caption{Average relative improvement of risk and discomfort between baseline and recommended routes. The light purple line is the mean improvement between the two estimates. It can be used to assess $\alpha$ values with a good balance between the two.}\label{routeComparer}
\end{figure}


Figure~\ref{PlotFreqComp}b and~\ref{PlotFreqComp}c illustrate the changes in street utilization by increasing from $\alpha=0$ to $\alpha=0.5$ and $\alpha=0.75$ respectively, i.e. higher priority on safety is given to the recommended routes by BFS. The 2000 randomly generated departure and destination points are used for the mapping of street utilizations. 

\begin{figure}[!htb]
\centering
\subfloat[Recommended route on the interactive map.]{\includegraphics[width=0.302\linewidth]{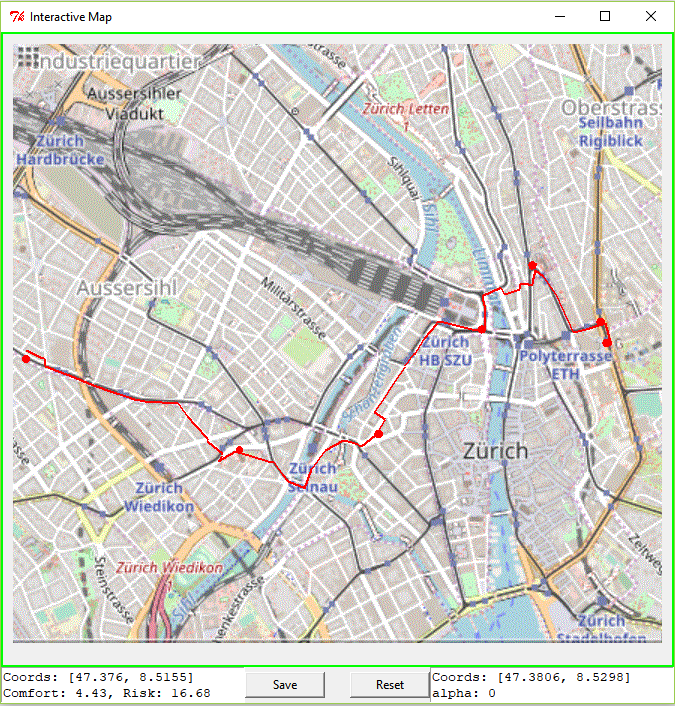}}
\subfloat[From $\alpha=0$ to $\alpha=0.5$.]{\includegraphics[width=0.345\linewidth]{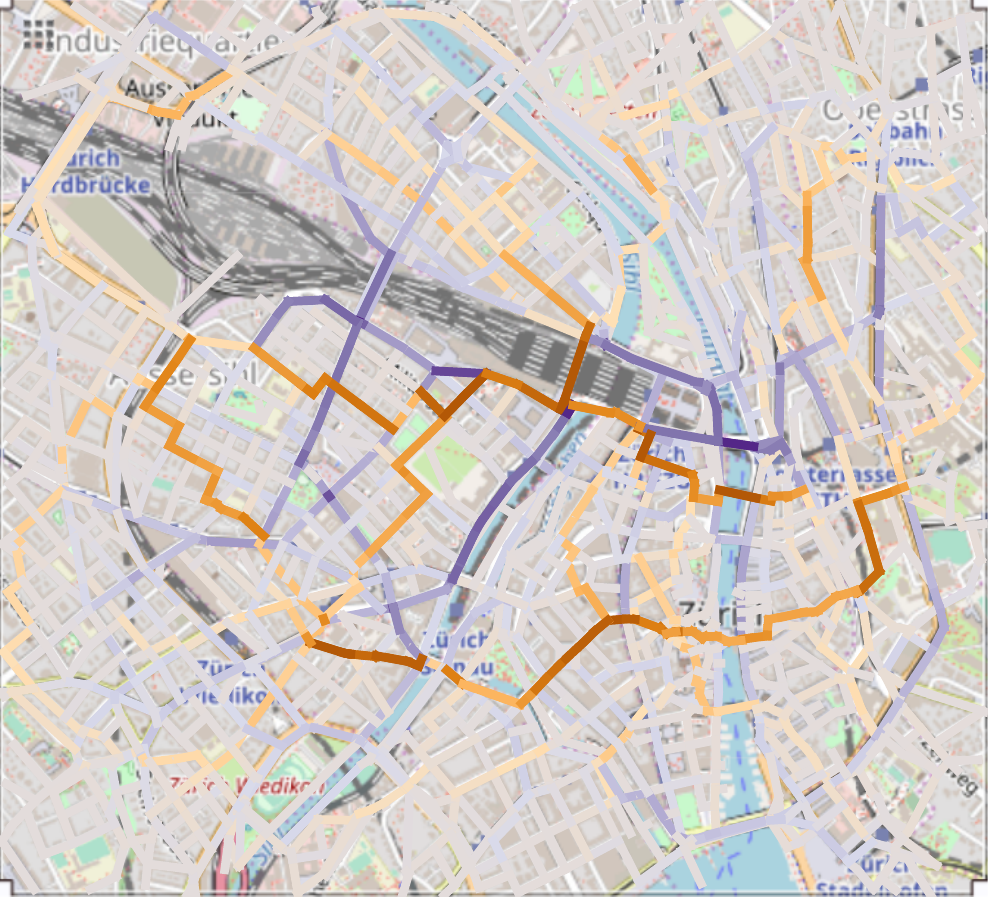}}\label{PlotFreqComp5}
\subfloat[From $\alpha=0$ to $\alpha=0.75$.]{\includegraphics[width=0.345\linewidth]{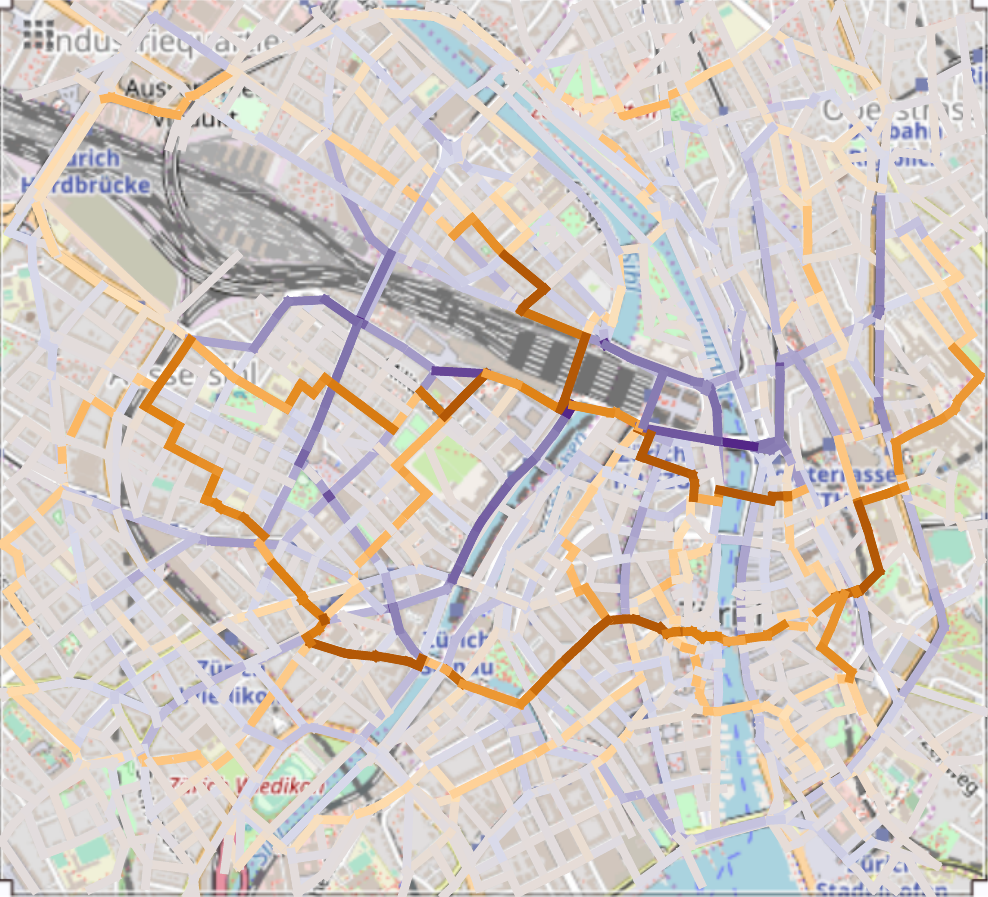}}\label{PlotFreqComp75}
\caption{The graphical user interface for interactive route recommendations and the changes in the frequency of street utilization by the recommended routes when changing from $\alpha=0$ to $\alpha=0.5$ and $\alpha=0.75$ respectively. Orange street segments indicate an increase in street utilization, while purple ones a decrease.}\label{PlotFreqComp}
\end{figure}


The colored maps show that areas such as Langstrasse\footnotemark[17] and Bahnhofstrasse\footnote{Available at https://en.wikipedia.org/wiki/Bahnhofstrasse (last accessed: May 2019).} are avoided already for $\alpha = 0.5$, while two cross-city routes become dominant.


\section{Conclusion and Future Work}\label{sec:conclusion}

This paper concludes that a data-driven approach for the estimation and mapping of cycling risk in complex evolving urban environments can provide invaluable empirical insights about safety. This is shown for the city center of Z\"urich, in which continuous risk contours are calculated based on historical geolocated accident data and information about their severity, linked to compensation policies of health insurances. Findings shows that bike accidents increase at a higher rate than bike use, while weather, seasonality, day of the week and time play a role on the likelihood of an accident and its severity. The predominance of self-caused accidents suggests the requirement for a higher awareness of risks and safe routing information. This requirement is met by personalized route recommendations that balance safety and comfort. The findings of this paper have an impact on the following: (i) The cyclists' risk awareness and safety improvement. (ii) Policy-making for improving transport infrastructure and encourage further the use of environmentally friendly transport means such as bikes by existing city habitants, new cyclists as well as tourists. 

Future work includes the expansion of the risk and discomfort estimation with exposure and vibration measures~\cite{USDEPTRANREP,Bil2015}, the study and comparison of several other cities\cite{Kaygisiz2017}, the influence of other traffic in cycling safety as well as the design of traffic simulation models for the participatory multi-objective optimization of traffic flows~\cite{Schmid2017,Amini2017,Pournaras2018}.

\begin{acknowledgements}
The authors would like to thank Leonie Fl\"uckiger for her support to this project. 
\end{acknowledgements}


\bibliographystyle{spmpsci}
\bibliography{bike-safety}


\end{document}